\newtheorem{theorem}{Theorem}
\newtheorem{lemma}[theorem]{Lemma}
\theoremstyle{remark}
\newtheorem{example}{Example}
\def\BibTeX{{\rm B\kern-.05em{\sc i\kern-.025em b}\kern-.08em
    T\kern-.1667em\lower.7ex\hbox{E}\kern-.125emX}}
\begin{document}

\title{Heterogeneity-aware Gradient Coding for Straggler Tolerance}

\author{
	\IEEEauthorblockN{
        Haozhao Wang\IEEEauthorrefmark{1},
        Song Guo\IEEEauthorrefmark{2},
        Bin Tang\IEEEauthorrefmark{3},
        Ruixuan Li\IEEEauthorrefmark{1} and
		Chengjie Li\IEEEauthorrefmark{1}
	}
	\IEEEauthorblockA{
		\IEEEauthorrefmark{1}School of Computer Science and Technology,
		Huazhong University of Science and Technology, Wuhan 430074, China\\
		\IEEEauthorrefmark{2}Department of Computing,
		The Hong Kong Polytechnic University, Hung Hom, Kowloon, Hong Kong\\
		\IEEEauthorrefmark{3}National Key Laboratory for Novel Software Technology,
		Nanjing University, Nanjing 210023, China\\
		Email: \{hz\_wang, rxli, cjl1720\}@hust.edu.cn, cssongguo@comp.polyu.edu.hk, tb@nju.edu.cn
	}
}

\maketitle

\begin{abstract}

Gradient descent algorithms are widely used in machine learning. In order to deal with huge
volume of data, we consider the implementation of gradient descent algorithms in a distributed
computing setting where multiple workers compute the gradient over some partial data and the
master node aggregates their results to obtain the gradient over the whole data. However, its
performance can be severely affected by straggler workers. Recently, some coding-based
approaches are introduced to mitigate the straggler problem, but they are efficient only
when the workers are homogeneous, i.e., having the same computation capabilities. In this
paper, we consider that the workers are heterogenous which are common in modern distributed
systems. We propose a novel heterogeneity-aware gradient coding scheme which can not only
tolerate a predetermined number of stragglers but also fully utilize the computation
capabilities of heterogenous workers. We show that this scheme is optimal when the computation
capabilities of workers are estimated accurately. A variant of this scheme is further proposed
to improve the performance when the estimations of the computation capabilities are not so
accurate. We conduct our schemes for gradient descent based image classification on
QingCloud clusters. Evaluation results show that our schemes can reduce the whole computation
time by up to $3\times$ compared with a state-of-the-art coding scheme.
\end{abstract}

\begin{IEEEkeywords}
Modern distributed system, straggler tolerance, gradient coding, heterogeneity-aware
\end{IEEEkeywords}

\section{Introduction}
    With the rapid increasing of data size, fast processing of big data becomes more and
more important. Due to the saturation of Moore's law, distributed processing has been viewed
as the primary method for breaking down the limitation of computing power. Modern systems
for distribute processing of big data like MapReduce \cite{DBLP:journals/cacm/DeanG08} and
Apache Spark \cite{DBLP:conf/hotcloud/ZahariaCFSS10} usually adopt a master-slave
architecture. In such architecture, a master server divides the initial task into many
small tasks and assigns them to several slave nodes (worker). These workers process tasks
in parallel and return outcomes back to master after finishing.

    In such distributed form, the performance of distributed system is usually limited by
delays or faults as master collects outcomes from workers \cite{dean2013tail}. Delays or
faults are usually incurred by stragglers which are workers that cannot return outcome
within a reasonable deadline. Stragglers are mainly caused by two reasons, 1) transient
fluctuation of resource in cluster, e.g., fault occurrence \cite{yan2016tr, harlap2017proteus},
resource contention between processes, and 2) consistent heterogeneity of clusters
\cite{jiang2017heterogeneity}. Due to the notable negative impact of stragglers on
performance, many recent works were proposed trying to mitigate them regarding to
different tasks \cite{DBLP:conf/nsdi/AnanthanarayananGSS13, zhang2014dynamic,ananthanarayanan2014grass}.
In this paper, we focus on the task of gradient computing. Gradient is the derivative of objective function and is of great importance
for being the cornerstone of many optimization algorithms \cite{DBLP:conf/nips/CutkoskyB18, DBLP:journals/corr/KingmaB14}.
For gradient computing task Tandon \cite{tandon2017gradient} proposes using coding
method to tolerate stragglers. In their framework, the gradient of a sample is computed
by several workers so that the gradient of the sample could be recovered by master as long as
master receives the update of any worker that participates in the gradient computation of the
sample. The essence of this gradient coding method is to improve stragglers tolerance
by making data duplication. Though their method works efficiently for stragglers incurred
trasient fluctuation, it can do nothing for stragglers caused by heterogeneneity. This is
because it does not take computing capabilities of workers into account as designing coding
scheme. Another work \cite{maity2018robust} encodes the second-moment of data to reduce
computational overhead of encoding naive data. However, it is only limited to the gradient of
linear model which cannot be used in many domains, e.g, training of DNN.

    Considering all the insufficiencies of existing methods, we seek to tolerate stragglers
incurred both the two reasons, i.e., stragglers in heterogeneous clusters, such that
the processing efficiency of distributed system could be improved. This is a non-trival problem,
because heterogeneneity is very common in modern clusters \cite{zhang2014dynamic, zhao2014improving, jiang2017heterogeneity}.
In fact, we can solve this problem by designing a solution that can both tolerate transient
stragglers and take full utilization of the computing resources in heterogeneous cluster.
To acheive this goal, we propose two heterogeneity-aware gradient coding methods
that adaptively allocate data partitions to each worker according to their computing capabilities.
In this way, each worker has the similiar completion time so that the consistent stragglers
incurred by heterogeneity could be eliminated. On the other hand, the transient stragglers
will also be eliminated by using coding theory.

    To implement heterogeneity-aware gradient coding scheme, data partitions
are firstly allocated to each worker according to their processing speed, and then we
show how to construct coding strategy. The experiemental evaluations were done on popular deep
learning tasks on several heterogeneous clusters range from $8$ workers to $48$ workers.
Results show that our methods improve the performance of deep learning task up to
$3\times$ compared to traditional gradient coding methods.

    Our contributions are summarized as follows:
    \begin{itemize}
        \item Straggler tolerance in heterogeneous setup is of great importance, but is
        ignored by existing methods. We propose a new heterogeneity-aware gradient coding
        scheme that could work efficiently in heterogeneous clusters while tolerating stragglers.
        \item We theoretically show that our heterogeneity-aware gradient coding scheme
        is optimal for a cluster with accurately estimated computing capacity.
        \item Considering practicalities of running system that the computing capacity is hard to be
        measured accurately, we further propose a more effcient variant of heterogeneity-aware
        gradient coding scheme.
        \item We conduct our coding schemes for gradient-based machine learning tasks on
        QingCloud clusters. Evaluation results show that our coding scheme could not only tolerate stragglers
        but also take fully utilization of computing capabilities of workers.
    \end{itemize}

    \indent This paper is organized as follows. The related work about stragglers in
distributed system is firstly presented in Section \uppercase\expandafter{\romannumeral2}.
And then, we present the problem formulation in Section \uppercase\expandafter{\romannumeral3}.
After that, we present our designed two heterogeneity-aware gradient coding schemes,
\textit{heter-aware} and \textit{group-based} coding scheme. In Section
\uppercase\expandafter{\romannumeral6}, a wide range of evaluations are performed in
various large-scale heterogeneous clusters to show the efficiency of our coding scheme.
Finally, the conclusions are drawn in Section \uppercase\expandafter{\romannumeral7}.

\section{Related Work}
    Straggler problem has a long history in parallel computing, and it attracts more and more interests
as the era of big data comes. Here below, we will firstly introduce methods for straggler problem from
specific to general, and then show recently emerging coded methods for straggler mitigation.

    \indent Considering distributed learning task is the typical task by using gradient, we firstly
the related work in distributed learning system for stragglers.  Due to fault tolerance property inherent
in machine learning task, there are many methods trying to starting with parallel mechanism.
Typical algorithms including asynchronous parallel training algorithms including
TAP\cite{smola2010architecture, dean2012large} and SSP\cite{ho2013more, cipar2013solving, cui2014exploiting}
were proposed to avoid stragglers in learning steps, where the core idea of these methods is to improve hardware
efficiency by sacrificing statistical efficiency (e.g., convergence accuracy and speed)\cite{hadjis2016omnivore}.
Further based on SSP, DynSSP\cite{jiang2017heterogeneity} was proposed to improve the statistical efficiency of
asynchronous learning by tuning learning rates. Though such parallel algorithms could reduce the affecting
of stragglers, they are hard to analysis, debug, and reproduced. Besides, the accuracy as convergence couldn't
reach optimal as shown in\cite{chen2016revisiting}. Different from these work, we try to mitigate stragglers
for the BSP distribtued scheme that keeps accuracy. \\
    \indent Another line for mitigating stragglers is load balancing which can be referenced by general task.
There are many work \cite{blumofe1999scheduling, acar2013scheduling, dinan2009scalable} trying to rebalance
workload allocation by using work stealing in traditional parallel computing. Work stealing in fact is a
technique that reallocates tasks from busy cores to idle cores. However, this idea isn't suitable to
machine learning task, especially DNN's training. One reason is that each iteration of DNN's training is
very short which lasts only a few seconds or less
\cite{DBLP:journals/corr/ChenLLLWWXXZZ15, DBLP:conf/osdi/2016, goyal2017accurate} causing that the
detection of stragglers and transferring of workloads are almost impossible. In this paper, we propose a
new load balancing method that use the property of data parallel processing task that the
computing complexity is of each task is proportional to its number of samples.

    \indent Recently, coding theory based methods were also introduced to distributed computation to tolerate
stragglers. The initial work was proposed in \cite{DBLP:conf/isit/2016, lee2018speeding} that they aim at
large-scale matrix multiplication. They encode the matrix to tolerate stragglers and design a coding
shuffling algorithm to reduce the data shuffling traffic. An improvement in \cite{maityrobust, maity2018robust}
is that they encode the second moment of data for the linear regression problem to reduce computational overhead
of encoding naive data. \cite{li2018polynomially, yu2018lagrange, ozfaturay2018speeding} utilize polynomial
interpolation to design the coded computation to tolerate more stragglers under the same workload compared
to traditional coding method. But different from our model, all these algorithms are only limited to the
linear model which couldn't be adopted by a broad of optimization problems.
For example, this strict condition cannot be satisfied by current DNN models. A general coding method named
gradient coding was proposed in \cite{tandon2017gradient}. Different from traditional works that encode the
data directly, they encode the gradients generated by optimization algorithm such that the linear model
constraint could be ruled out. Based on \cite{tandon2017gradient}, \cite{ye2018communication} proposes
reducing communication overhead by using coding method but further increases computing load incurred by
coding method. Besides, both their coding methods have not taken computing capacity of workers into account
causing the waste of computing resource. Though \cite{raviv2017gradient} and \cite{charles2017approximate}
aim at reducing computing load of coding method, they are at the cost of scarificing optimization accuracy.
Recognizing that, here in this paper we propose a heterogeneity-aware gradient coding method for general
optimization problem which not only takes computing capacity into account but also keeps accuracy of model.

\section{Problem Formulation}

\subsection{The Framework}
    Consider a typical distributed learning system, as illustrated in Fig.\ref{HeterogeneousCluster},
which consists of a master and a set of $m$ workers denoted by $\mathcal{W}=\{W_1,W_2,\ldots,W_m\}$.
A whole dataset $\mathcal{D}$ is divided into $k$ equal-sized data partitions,
denoted by $D_1,D_2,\ldots,D_k$, i.e., $\mathcal{D}=\{D_1,D_2,\ldots,D_k\}$. The partial gradient
over a data partition $D_i\in \mathcal{D}$ is denoted as $\mathbf{g}_i$, which can be obtained by
computation with $\mathcal{D}_i$. The whole task of distributed computation over this learning
system is to obtain the aggregated gradient as
\begin{equation*}
  \mathbf{g}=\sum_{i=1}^k \mathbf{g}_i.
\end{equation*}

\begin{figure}[htbp]
   \centering
   \includegraphics[width = 220pt, height = 150pt]{./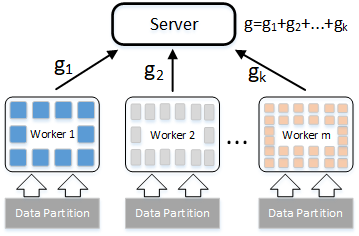}
   \caption{Distributed learning system with $m$ possible heterogeneous workers, where small
   rectangulars represent computing units. The main component of task is that server aggregates
   all patial gradients from workers.}
   \label{HeterogeneousCluster}
\end{figure}

A direct approach is to allocate different data partitions to different workers. Each worker computes
the partial gradients over the data partitions in hand, and then sends the summation of these partial
gradients to the master. After  collecting all the summations from the workers, the master can get
the aggregated gradient by summing up the summations. However, when there exists some straggler, the
computation latency could be significantly increased. Even worse, when some worker fails (e.g.,virtual
machine breaks down), the whole task cannot be completed. In order to tolerate stragglers/failures,
we consider the following general coding-based scheme. Initially, each worker $W_i$ is allocated
with a subset of data partitions $\mathcal{D}_i\subseteq \mathcal{D}$, where different $\mathcal{D}_i$
could be joint. Then $W_i$ computes all the corresponding gradients $\{\mathbf{g}_j\}_{j\in \mathcal{D}_i}$.
After this, $W_i$ encodes these gradients as $\mathbf{\tilde{g}}_i=e_i(\{\mathbf{g}_j\}_{j\in \mathcal{D}_i})$,
where $e_i$ is the encoding function of $W_i$, and sends $\tilde{\mathbf{g}}_i$ to the master.
After receiving enough results from some workers, say $\tilde{\mathcal{W}}\subseteq \mathcal{W}$, the
master recovers the desired aggregated gradient
$\mathbf{g}=h(\{\tilde{\mathbf{g}}_i\}_{i\in \tilde{\mathcal{W}}})$ immediately, where $h$ is
referred to as the decoding function.

\subsection{Gradient Coding Strategy}
Same as \cite{tandon2017gradient}, we consider linear encoding functions, i.e.,
$\tilde{\mathbf{g}}_i$ is a linear combination of $\mathbf{g}_j$, $j\in \mathcal{D}_i$.
Specifically, we can represent $\tilde{\mathbf{g}}_i$ as
\begin{equation*}
  \tilde{\mathbf{g}}_i=\mathbf{b}_i \cdot [\mathbf{g}_1,\mathbf{g}_2,\ldots,\mathbf{g}_k]^T,
\end{equation*}
where vector $\mathbf{b}_i\in \mathbb{R}^k$, and its support, denoted by $\text{supp}(\mathbf{b}_i)$,
which is the set of indices of non-zero entries of $\mathbf{b}_i$, satisfies that
$\text{supp}(\mathbf{b}_i)=\{j\mid D_j\in \mathcal{D}_i\}$, i.e., the indices of non-zero entries
of $\mathbf{b}_i$ show the allocation of data partitions to worker $W_i$.
Let $\mathbf{B}=[\mathbf{b}_1,\mathbf{b}_2,\ldots,\mathbf{b}_m ]^T\in \mathbb{R}^{m\times k}$,
which not only describes the allocation of data partition to each worker, but also represents
the encoding function of each worker. Henceforth, we will refer to $\mathbf{B}$ as a
\emph{gradient coding strategy}.

We seek gradient coding strategies that are robust to any $s$ stragglers with $s<m$.
Same as \cite{tandon2017gradient}, we assume that any straggler is a full straggler, i.e, it
can be arbitrarily slow to the extent of complete failure. Under this assumption, a sufficient
and necessary condition for a gradient coding strategy to be robust to any $s$ stragglers has
been shown in \cite{tandon2017gradient} as follows.
\begin{lemma}
\label{lem:C1}
  A gradient coding strategy
  $\mathbf{B}=[\mathbf{b}_1,\mathbf{b}_2,\ldots,\mathbf{b}_m ]^T\in \mathbb{R}^{m\times k}$
  is robust to any $s$ stragglers if and only if $\mathbf{B}$ satisfies the following condition:

  \noindent \textbf{(Condition 1):} for any subset $I\subseteq [m]$, $|I|=m-s$,
  \begin{equation}
    \mathbf{1}_{1\times k}\in \mathbf{span}(\{\mathbf{b}_i\mid i\in I\}),
  \end{equation}
  where $\mathbf{1}_{1\times k}$ is a all one vector, and $\mathbf{span}(\cdot)$ is the span of vectors.
\end{lemma}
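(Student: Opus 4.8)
The plan is to prove both directions of the equivalence separately, treating each straggler pattern in terms of which rows of $\mathbf{B}$ survive.

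\medskip

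\noindent\textbf{Sufficiency ($\Leftarrow$).} Suppose Condition 1 holds. Fix any set of $s$ stragglers and let $I\subseteq[m]$ with $|I|=m-s$ be the indices of the non-straggling workers. By Condition 1 there exist coefficients $a_i$, $i\in I$, such that $\sum_{i\in I}a_i\mathbf{b}_i=\mathbf{1}_{1\times k}$. The master then forms
\begin{equation*}
  \sum_{i\in I}a_i\tilde{\mathbf{g}}_i
  =\sum_{i\in I}a_i\,\mathbf{b}_i\cdot[\mathbf{g}_1,\ldots,\mathbf{g}_k]^T
  =\mathbf{1}_{1\times k}\cdot[\mathbf{g}_1,\ldots,\mathbf{g}_k]^T
  =\sum_{j=1}^k\mathbf{g}_j=\mathbf{g},
\end{equation*}
so the decoding function $h$ exists (take it to be this linear combination, with coefficients depending on $I$). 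Hence $\mathbf{B}$ is robust to any $s$ stragglers. First I would note that this direction is essentially a one-line computation once the right linear combination is named.

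\medskip

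\noindent\textbf{Necessity ($\Rightarrow$).} Suppose $\mathbf{B}$ is robust to any $s$ stragglers. Fix any $I\subseteq[m]$ with $|I|=m-s$ and treat the complement as the straggler set. Robustness means the master can recover $\mathbf{g}=\sum_{j=1}^k\mathbf{g}_j$ from $\{\tilde{\mathbf{g}}_i\}_{i\in I}$ for every possible value of the underlying partial gradients. Since $\tilde{\mathbf{g}}_i=\mathbf{b}_i\cdot[\mathbf{g}_1,\ldots,\mathbf{g}_k]^T$, the received data depends on the (arbitrary) gradients only through the vector in the row space of $\mathbf{B}_I$ (the submatrix with rows indexed by $I$). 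If $\mathbf{1}_{1\times k}\notin\mathbf{span}(\{\mathbf{b}_i\mid i\in I\})$, then the linear functional $[\mathbf{g}_j]\mapsto\sum_j\mathbf{g}_j$ is not determined by $\mathbf{B}_I[\mathbf{g}_j]$: concretely, there is a vector $\mathbf{v}$ in the kernel of $\mathbf{B}_I$ (acting on the space of $k$-tuples of gradients) with $\mathbf{1}_{1\times k}\cdot\mathbf{v}\neq 0$, so replacing $[\mathbf{g}_1,\ldots,\mathbf{g}_k]^T$ by $[\mathbf{g}_1,\ldots,\mathbf{g}_k]^T+\mathbf{v}$ leaves every $\tilde{\mathbf{g}}_i$, $i\in I$, unchanged while changing $\mathbf{g}$. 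This contradicts the assumption that $h$ recovers $\mathbf{g}$ from the $\tilde{\mathbf{g}}_i$. Therefore $\mathbf{1}_{1\times k}\in\mathbf{span}(\{\mathbf{b}_i\mid i\in I\})$, and since $I$ was arbitrary, Condition 1 holds.

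\medskip

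\noindent\textbf{Main obstacle.} The routine parts are the linear-algebra bookkeeping; the one point that needs care is making the necessity argument rigorous when the gradients $\mathbf{g}_j$ are vectors (in $\mathbb{R}^d$, say) rather than scalars and the decoding function $h$ is a priori allowed to be arbitrary (not linear). I would handle the first issue by applying the scalar argument coordinate-wise, or equivalently tensoring: the relevant identity $\mathbf{1}_{1\times k}\in\mathbf{span}(\mathbf{b}_i)$ is over $\mathbb{R}$ and does not see $d$. For the second issue, I would argue that even a nonlinear $h$ cannot distinguish two inputs $[\mathbf{g}_j]$ and $[\mathbf{g}_j]+\mathbf{v}\otimes\mathbf{e}$ (for a fixed direction $\mathbf{e}\in\mathbb{R}^d$) that produce identical transcripts $\{\tilde{\mathbf{g}}_i\}_{i\in I}$, yet these inputs have different target sums — so no $h$, linear or not, can succeed, which is the needed contradiction. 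This mirrors the proof in \cite{tandon2017gradient}, so I expect no genuine difficulty, only the need to state the quantifiers ("for all gradient values") explicitly.
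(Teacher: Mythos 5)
Your proof is correct. Note that the paper itself gives no proof of this lemma --- it is imported verbatim from the cited gradient-coding work of Tandon et al. --- so the only comparison available is with the standard argument there, which yours matches: sufficiency by exhibiting the linear decoding combination $\sum_{i\in I}a_i\tilde{\mathbf{g}}_i=\mathbf{g}$, and necessity by an indistinguishability argument using a vector $\mathbf{v}$ in the kernel of $\mathbf{B}_I$ with $\mathbf{1}_{1\times k}\cdot\mathbf{v}\neq 0$, which exists precisely because $\mathbf{1}_{1\times k}\notin\mathbf{span}(\{\mathbf{b}_i\mid i\in I\})$ over $\mathbb{R}$. Your handling of the two subtleties is also sound and slightly more general than the framework strictly requires: the coordinate-wise (tensoring) reduction disposes of vector-valued gradients, and the transcript-indistinguishability step rules out even nonlinear decoders $h$, whereas the paper restricts attention to linear decoding via $\mathbf{A}\mathbf{B}=\mathbf{1}$.
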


    Given the coding strategy $\mathbf{B}$ that satisfies the condition (C1), the decoding strategy
$\mathbf{A} \in \mathbb{R}^{S \times m}$ could be correspondingly acheived for all $S$ stragglers patterns,
where $S=\binom{m}{s}$. Considering each row $\mathbf{a}_i$ of $\mathbf{A}$ denotes a specific scenario
of stragglers, master decodes $g$ by using coded gradients $\{\tilde{\mathbf{g}}_j\}_{j\in supp(\mathbf{a}_i)}$
sent by workers in $supp(\mathbf{a}_i)$. Accordingly, the decoding function can also be a linear combination as
    \begin{align*}
        h(\{\tilde{\mathbf{g}}_j\}_{j\in supp(\mathbf{a}_i)}) &= \sum_{j \in supp(\mathbf{a}_i)} \mathbf{a}_i(j) \tilde{\mathbf{g}}_j \\
        &= \mathbf{a}_i \mathbf{B} \cdot [\mathbf{g}_1,\mathbf{g}_2,\ldots,\mathbf{g}_k]^T
    \end{align*}
Hence, the decoding strategy $\mathbf{A}$ can be constructed by using
   \begin{equation}
   \label{eq:Aconstruct}
       \mathbf{A}_{Sm} \mathbf{B}_{mk} = \mathbf{1}_{Sk}
   \end{equation}
To reduce storage cost, the decoding matrix $\mathbf{A}$ could be partially stored specially for regular
stragglers. As to decoding functions $h(\{\tilde{\mathbf{g}}_j\}_{j\in supp(\mathbf{a}_i)})$ designed
for unregular stragglers, the decoding vectors $\mathbf{a}_i$ could solved in realtime in a complexity
of $O(mk^2)$. Note that the time for solving decoding vector usually can be ignored due to $m$ and $k$
are usually small numbers.

\subsection{Problem Formulation}

    Besides the tolerance of stragglers, we mainly concern about the computation time of the whole task.
We consider heterogeneous workers which have different computation capabilities. For each worker $W_i$,
let $c_i$ denote the number of partial gradients over data partitions that can be computed when $W_i$
is a non-straggler, which can be estimated by sampling. Thus, given a gradient coding strategy
$\mathbf{B}=[\mathbf{b}_1,\mathbf{b}_2,\ldots,\mathbf{b}_m ]^T$, the computation time of worker $W_i$,
denoted by $t_i$, is given by
\begin{equation*}
  t_i=\frac{{||\mathbf{b}_i||}_0}{c_i},
\end{equation*}
where ${||\mathbf{b}_i||}_0$ denotes the $\ell_0$-norm of $\mathbf{b}_i$, or equivalently, the
cardinality of $\text{supp}(\mathbf{b}_i)$. Without loss of generality, we assume that
$t_1\leq t_2\leq \cdots \leq t_m$.

Evidently, the computation time of the whole task under strategy $\mathbf{B}$ depends on which
workers are stragglers, referred to as straggler pattern. For a considered straggler pattern
$\mathcal{S}$, the computation time of the whole task under strategy
$\mathbf{B}$, denoted by $T(\mathbf{B},\mathcal{S})$ can be characterized as
\begin{equation*}
  T(\mathbf{B},\mathcal{S})=t_{j^*},
\end{equation*}
where $j^*$ is the minimum value of $j$ such that
\begin{equation*}
\mathbf{1}_{1\times k}\in \mathbf{span}(\{\mathbf{b}_i\}_{i\leq j, W_i\notin \mathcal{S}}).
\end{equation*}

For a gradient coding strategy $\mathbf{B}$ that can tolerate up to $s$ stragglers, we evaluate its
performance by the computation time of the whole task under $\mathbf{B}$ in the worst case, which
is denoted by $T(\mathbf{B})$ and is given by
\begin{equation}
  \label{eq:TDefinition}
  T(\mathbf{B})=\max_{\mathcal{S}\subset \mathcal{W}:|\mathcal{S}|\leq s} T(\mathbf{B},\mathcal{S}).
\end{equation}
Aiming at finding a gradient coding strategy with a best performance, we have the following
optimization problem:
\begin{IEEEeqnarray}{rCl}
   \label{OptimizationProblem}
  &\min\quad &T(\mathbf{B}) \nonumber\\
  & \text{s.t.} \quad & \mathbf{B}\text{ satisfies Condition 1}.
\end{IEEEeqnarray}

    For ease of reading, the main notations used in this paper are sumarized in the following
Table.\ref{tab:Symbol}
\begin{table}[htbp]
\centering
\caption{Symbols}
\label{tab:Symbol}
\begin{tabular}{ll}
    \toprule
    Symbol          &   Definition                            \\
    \midrule
    $m$             &   The number of worker                  \\
    $k$             &   The number of data partition          \\
    $s$             &   The number of stragglers              \\
    $W_i$           &   Worker $W_i$                          \\
    $n_i$           &   The number of data partitions in worker $W_i$ \\
    $c_i$           &   The throughput of worker i            \\
    $\mathbf{A}$    &   Decoding matrix                       \\
    $\mathbf{B}$    &   Coding matrix                         \\
    $\mathbf{1}$    &   Matrix with all elements being $1$    \\
    $[m]$           &   \{$1$,\ldots, $m$\}                   \\
    $supp(\mathbf{b})$  &   \{$i$ $\mid$ $v_i \neq 0$, $v_i$ is the element of vector $\bm{v}$\} \\
    $\mathcal{S}$   &   \{$i$ $\mid$ $i \in [m]$, $i$ is straggler \}       \\
    $\mathcal{D}$   &   The set of all data partitions                      \\
    $\mathcal{W}$   &   The set of all workers                              \\
    $\mathcal{G}$   &   Group composed of workers                           \\
    $\mathcal{P}$   &   Groups set composed of groups                       \\
    \bottomrule
\end{tabular}
\end{table}

\section{Heterogeneity-aware Gradient Coding Strategy}
In this section, we will show our coding scheme for heterogeneous distributed system detailly.
Firstly, we specify how to design the support of $\mathbf{B}$ with the considering of load
balance and stragglers tolerance. We solve this by designing an heterogneity-aware data
allocation scheme. After that, the construction process of $\mathbf{B}$ is elaborated which
is the key for accurate decoding. Finaly, we show that our coding strategy is optimal to
problem (\ref{OptimizationProblem}).

\subsection{The Design}
We first show how to allocate data partitions to the workers, which gives the support structure
of $\mathbf{B}$, i.e., the positions of non-zero elements in $\mathbf{B}$.

In order to tolerate $s$ stragglers, each data partition $D_i$ has to be assigned to at least
$s+1$ workers to compute $\mathbf{g}_i$. In our design, $D_i$ is copied exactly $s+1$ times,
and there are in total $k(s+1)$ copies of data partitions, i.e., $\sum_{i=1}^m n_i=k(s+1)$,
where $n_i$ is the number of data partitions assigned to worker $W_i$. For load balancing,
we set $n_i$ to be proportional to $c_i$, the computation rate of $W_i$. Hence, we have
\begin{equation}
  n_i=k(s+1)\cdot \frac{c_i}{\sum_{j=1}^m c_j}.
\end{equation}
Without loss of generality, here we assume that $k(s+1)\cdot \frac{c_i}{\sum_{j=1}^m c_j}$
is an integer, and $n_i\leq k$.

Once $n_i$ are fixed, we assign the total $k(s+1)$ copies of data partitions to the workers
in a cyclic manner. Specifically, the set of  data partitions assigned to worker $W_i$ are
given as
\begin{equation}
  \label{eq:DataAllocation}
  \mathcal{D}_i=\{D_{(n_i'+1)\text{mod }k},D_{(n_i'+2)\text{mod }k},\ldots,D_{(n_i'+n_i)\text{mod }k}\}.
\end{equation}
where $n_i'=\sum_{j=1}^{i=1} n_j$. It is straightforward to see that, for each $D_i$, there
are exact $s+1$ copies assigned to $s+1$ different workers. By denoting $\star$ as non-zero entry,
the support structure of worker $W_i$ is $supp(\mathbf{b}_i)=[b_1,b_2,\cdots, b_k]$,
where $b_j=\star$ if $D_j \in \mathcal{D}_i$ else $b_j=0$, and the support sutructure of
$\mathbf{B}_{m \times k}$ can be written as

\begin{equation}
    \label{mtx:suppB}
    supp(\mathbf{B}_{m \times k})= [\mathbf{b}_1, \mathbf{b}_2, \cdots, \mathbf{b}_m]^T
\end{equation}

\begin{example}
As an example, consider a $5$-workers system with normalized sampling throughput as $\mathbf{c}=[1,2,3,4,4]$.
If there is $1$ straggler, we could allocate data partitions and determine suppoprt structure of
$\mathbf{B}$ as

\begin{equation*}
    \label{example:suppB}
    supp(\mathbf{B}_{5 \times 7})= \\
    \begin{bmatrix}
        \star & 0      & 0     & 0      & 0      & 0     & 0       \\
        0     & \star  & \star & 0      & 0      & 0     & 0       \\
        0     & 0      & 0     & \star  & \star  & \star & 0       \\
        \star & \star  & \star & 0      & 0      & 0     & \star   \\
        0     & 0      & 0     &\star   & \star  & \star & \star
    \end{bmatrix}
\end{equation*}
\end{example}

Given the support structure of $\mathbf{B}$, we now introduce how to construct $\mathbf{B}$ such that
it can satisfy the condition (C1). In our construction, an auxiliary matrix
$\mathbf{C}\subseteq \mathbb{R}^{(s+1)\times m}$ is introduced, which satisfies the following properties.
\begin{itemize}
  \item (P1): any $s+1$ columns of $\mathbf{C}$ is linearly independent.
  \item (P2): for any submatrix $\mathbf{C}'$ composed by $s$ columns of $\mathbf{C}$ and any non-zero
  vector $\boldsymbol{\lambda} =(\lambda_1,\ldots,\lambda_{s+1})\in \mathbb{R}^{s+1}$ such
  that $\boldsymbol{\lambda} \mathbf{C}'=\mathbf{0}_{1\times s}$, $\sum_{i=1}^{s+1}\lambda_{i}\neq 0$.
\end{itemize}
The usefulness of such a $\mathbf{C}$ is revealed by the following result.

\begin{lemma}
\label{lem:Cproperty}
  For a matrix $\mathbf{C}\subseteq \mathbb{R}^{(s+1)\times m}$ having properties (P1) and (P2),
  there exists a matrix $\mathbf{B}\subseteq \mathbf{R}^{m\times k}$ with a support structure of (\ref{mtx:suppB})
  such that $\mathbf{C}\mathbf{B}=\mathbf{1}_{(s+1)\times k}$ and $\mathbf{B}$ satisfies condition (C1).
\end{lemma}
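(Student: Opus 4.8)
The plan is to construct $\mathbf{B}$ column by column. Recall that the support structure (\ref{mtx:suppB}) comes from the cyclic allocation (\ref{eq:DataAllocation}): for each data partition $D_j$, there is a set $I_j\subseteq[m]$ of exactly $s+1$ worker indices whose rows are allowed to be non-zero in column $j$, and these $s+1$ indices are consecutive (cyclically) in $[m]$. For a fixed $j$, let $\mathbf{C}_{I_j}\in\mathbb{R}^{(s+1)\times(s+1)}$ be the submatrix of $\mathbf{C}$ formed by the columns indexed by $I_j$. Property (P1) guarantees that $\mathbf{C}_{I_j}$ is invertible, so I would define the $j$-th column of $\mathbf{B}$ by placing $\mathbf{C}_{I_j}^{-1}\mathbf{1}_{(s+1)\times 1}$ in the rows indexed by $I_j$ and zeros elsewhere. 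By construction this column has the prescribed support (one must note the entries of $\mathbf{C}_{I_j}^{-1}\mathbf{1}$ are all non-zero; if some vanished the support would be smaller, but this is harmless for (C1) and can be absorbed by a genericity argument on $\mathbf{C}$, or simply ignored since a zero entry only makes the coding sparser). Doing this for every $j\in[k]$ yields $\mathbf{B}$, and by the defining relation $\mathbf{C}_{I_j}(\text{column }j\text{ of }\mathbf{B})=\mathbf{1}_{(s+1)\times 1}$ we immediately get $\mathbf{C}\mathbf{B}=\mathbf{1}_{(s+1)\times k}$.

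Next I would verify condition (C1): for every $I\subseteq[m]$ with $|I|=m-s$, the all-ones row vector $\mathbf{1}_{1\times k}$ lies in $\mathbf{span}(\{\mathbf{b}_i:i\in I\})$. Equivalently, writing $\mathbf{B}_I$ for the $(m-s)\times k$ submatrix of rows in $I$, I need a row vector $\mathbf{u}\in\mathbb{R}^{1\times(m-s)}$ with $\mathbf{u}\,\mathbf{B}_I=\mathbf{1}_{1\times k}$. The complement $\bar I=[m]\setminus I$ has size $s$. Consider the matrix $\mathbf{C}_{\bar I}\in\mathbb{R}^{(s+1)\times s}$; since it has more rows than columns, its left null space is non-trivial, so pick a non-zero $\boldsymbol{\lambda}\in\mathbb{R}^{1\times(s+1)}$ with $\boldsymbol{\lambda}\mathbf{C}_{\bar I}=\mathbf{0}_{1\times s}$. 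Property (P2) then ensures $\sigma:=\sum_{i=1}^{s+1}\lambda_i\neq 0$. Now look at the row vector $\boldsymbol{\lambda}\mathbf{B}$. From $\mathbf{C}\mathbf{B}=\mathbf{1}_{(s+1)\times k}$ we get $\boldsymbol{\lambda}\mathbf{B}=\boldsymbol{\lambda}\mathbf{1}_{(s+1)\times k}=\sigma\,\mathbf{1}_{1\times k}$. On the other hand, for each $j\in[k]$ the only non-zero rows of column $j$ of $\mathbf{B}$ are those indexed by $I_j$; but $\boldsymbol{\lambda}\mathbf{C}_{\bar I}=\mathbf{0}$ means, componentwise, that $\lambda$ weighted against the $\mathbf{C}$-columns in $\bar I$ gives zero — this is not yet what I want, so the key observation is different: I should instead exploit that the rows of $\mathbf{B}$ indexed by $\bar I$ don't enter $\mathbf{u}\mathbf{B}_I$ at all, and build $\mathbf{u}$ from $\boldsymbol{\lambda}$.

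More carefully, the clean way is: I claim $\boldsymbol{\lambda}\mathbf{C}=:\mathbf{v}\in\mathbb{R}^{1\times m}$ is supported on $I$ (its $\bar I$-entries vanish, precisely because $\boldsymbol{\lambda}\mathbf{C}_{\bar I}=\mathbf{0}$). Then $\mathbf{v}\mathbf{B}=\boldsymbol{\lambda}\mathbf{C}\mathbf{B}=\boldsymbol{\lambda}\mathbf{1}_{(s+1)\times k}=\sigma\mathbf{1}_{1\times k}$, and since $\mathbf{v}$ is supported on $I$, $\mathbf{v}\mathbf{B}=\mathbf{v}_I\,\mathbf{B}_I$ where $\mathbf{v}_I$ is the restriction of $\mathbf{v}$ to the coordinates in $I$. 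Hence $\mathbf{u}:=\sigma^{-1}\mathbf{v}_I$ satisfies $\mathbf{u}\,\mathbf{B}_I=\mathbf{1}_{1\times k}$, which is exactly (C1). So (P2) is what makes the division by $\sigma$ legitimate, and (P1) is what makes the column-wise construction of $\mathbf{B}$ possible with the right support.

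The main obstacle I anticipate is the bookkeeping around supports: I must confirm that the column-wise definition genuinely produces a $\mathbf{B}$ whose support is contained in (equal to) the cyclic pattern (\ref{mtx:suppB}), which relies on $I_j$ being exactly the set of allowed rows in column $j$ and on $\mathbf{C}_{I_j}^{-1}\mathbf{1}$ having full support (a mild genericity condition on $\mathbf{C}$). A secondary point worth stating explicitly is that the existence of the null vector $\boldsymbol{\lambda}$ and the non-vanishing of $\sigma$ together are the reason $m-s$ rows always suffice — i.e. that this construction really certifies robustness to any $s$ stragglers, which by Lemma~\ref{lem:C1} is equivalent to (C1). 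Everything else is linear algebra that follows directly from $\mathbf{C}\mathbf{B}=\mathbf{1}_{(s+1)\times k}$.
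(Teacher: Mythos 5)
Your proposal is correct and follows essentially the same route as the paper: build $\mathbf{B}$ column by column via $\mathbf{C}_{I_j}^{-1}\mathbf{1}_{(s+1)\times 1}$ using (P1), then for each $I$ with $|I|=m-s$ take a left null vector $\boldsymbol{\lambda}$ of $\mathbf{C}_{\bar I}$, use (P2) to normalize by $\sum_i\lambda_i$, and observe that $\boldsymbol{\lambda}\mathbf{C}$ is supported on $I$ so that $\mathbf{1}_{1\times k}$ lies in the span of $\{\mathbf{b}_j\}_{j\in I}$. Your extra remark about possible zero entries of $\mathbf{C}_{I_j}^{-1}\mathbf{1}$ is a reasonable refinement the paper glosses over, and it does not affect the validity of the argument.
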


\begin{proof}
  Our proof proceeds as follows. First, we construct a matrix $\mathbf{B}\subseteq \mathbb{R}^{m\times k}$
with a support structure of (\ref{mtx:suppB}) such that $\mathbf{C}\mathbf{B}=\mathbf{1}_{(s+1)\times k}$.
Then, we show that $\mathbf{B}$ satisfies condition (C1).

For each $i=1,2,\ldots,k$, let $\mathbf{C}_i$ be the submatrix of $\mathbf{C}$ by deleting
all the $j$-th columns where the $j$-th element of the $i$-th column of the support structure (\ref{mtx:suppB})
is zero. Since each column of the support structure (\ref{mtx:suppB}) has $s+1$ non-zero elements,
$\mathbf{C}_i$ has $s+1$ columns which are linearly independent according to property (P1). Therefore,
$\mathbf{C}_i$ is non-singular, and has an inverse which is denoted by $\mathbf{C}_i^{-1}$.
Let
\begin{equation*}
  \mathbf{d}_i'=\mathbf{C}_i^{-1}\mathbf{1}_{(s+1)\times 1},
\end{equation*}
and $\mathbf{B}$ be the matrix formed by embedding each $\mathbf{d}_i'$, $i=1,2,\ldots,k$ into the $i$-th
column of the support structure (\ref{mtx:suppB}). The embedding process is to assign each value in
$\mathbf{d}_i'$ to $b_i$ according to the position presented in $supp(\mathbf{b}_i)$. Evidently,
$\mathbf{C}\mathbf{B}=\mathbf{1}_{(s+1)\times k}$.

Next we show that the constructed $\mathbf{B}$ satisfies condition C1. Let
$\mathbf{b}_1,\mathbf{b}_2,\ldots,\mathbf{b}_m$ be the rows of $\mathbf{B}$.
Consider an arbitrary subset $I\subseteq [m]$ such that $|I|=m-s$. Let $\mathbf{C}_{\bar{I}}$
be the submatrix composed by all the $j$-th columns of $\mathbf{C}$ where $j\notin I$.
Since $\mathbf{C}_{\bar{I}}$ has $s$ columns while it has $s+1$ rows, there exists some non-zero
vector $\boldsymbol{\lambda}=(\lambda_1,\lambda_2,\ldots,\lambda_{s+1})\in \mathbf{R}^{s+1}$ such
that $\boldsymbol{\lambda}\mathbf{C}_{\bar{I}}=\mathbf{0}_{1\times s}$. Since $\mathbf{C}$ satisfies
property (P2), we have $\sum_{i=1}^{s+1} \lambda_i\neq 0$. Hence,
\begin{equation*}
  \left(\frac{1}{\sum_{i=1}^{s+1}\lambda_i}\boldsymbol{\lambda} \mathbf{C}\right)\mathbf{B}=\frac{1}{\sum_{i=1}^{s+1}\lambda_i}\boldsymbol{\lambda}(\mathbf{C}\mathbf{B})=\mathbf{1}_{1\times k}.
\end{equation*}
Note that for  $j\notin I$, the $j$-th entry of the vector $\frac{1}{\sum_{i=1}^{s+1}\lambda_i}\boldsymbol{\lambda} \mathbf{C}$ is equal to 0 since $\boldsymbol{\lambda}\mathbf{C}_{\bar{I}}=\mathbf{0}_{1\times s}$. We then have that  $\mathbf{1}_{1\times k}$ belongs to the span of $\{\mathbf{b_j}\}_{j\in I}$. Therefore, $\mathbf{B}$ satisfies condition (C1). The proof is accomplished.
\end{proof}

In the proof of Lemma~\ref{lem:Cproperty}, we give a construction method of $\mathbf{B}$ with
desired properties if we have a matrix $\mathbf{C}$ satisfying properties (P1) and (P2).
Hence, all we need now is to construct such a matrix $\mathbf{C}$. In the following, we
show that a random choice of $\mathbf{C}$ suffices where each entry of $\mathbf{C}$ is
chosen from the interval $(0,1)$ independently and uniformly at random.
\begin{lemma}
    \label{lem:CRandom}
      For a matrix $\mathbf{C}\subseteq \mathbb{R}^{(s+1)\times m}$ where  each entry of
      $\mathbf{C}$ is chosen from the interval $(0,1)$ independently and uniformly at random, then
      $\mathbf{C}$ satisfies both properties of (P1) and (P2) with probability 1.
    \end{lemma}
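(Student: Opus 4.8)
The plan is to show that each of the two "bad events" --- the failure of property (P1) and the failure of property (P2) --- is contained in the zero set of some nontrivial polynomial in the entries of $\mathbf{C}$, and then invoke the standard fact that a nonzero polynomial vanishes on a Lebesgue-null set, so that a continuous random choice of entries avoids all of these events almost surely. Since there are only finitely many column subsets to worry about ($\binom{m}{s+1}$ for (P1) and $\binom{m}{s}$ for (P2)), a finite union of null sets is still null, and the conclusion follows with probability $1$.

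First I would handle (P1). Fix a set of $s+1$ column indices; the corresponding submatrix $\mathbf{C}'$ is square of size $(s+1)\times(s+1)$, and these columns are linearly dependent exactly when $\det(\mathbf{C}') = 0$. The determinant is a polynomial in the entries of $\mathbf{C}$, and it is not identically zero (e.g.\ the identity matrix is a valid assignment of values in $[0,1]$, or one can argue by expansion that the monomial coming from the main diagonal appears with coefficient $1$ and cannot be cancelled). Hence for each fixed choice of $s+1$ columns, $\Pr[\det(\mathbf{C}') = 0] = 0$, and a union bound over all $\binom{m}{s+1}$ choices gives that (P1) holds with probability $1$.

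The more delicate part is (P2), and this is where I expect the main obstacle to lie. Fix a set of $s$ column indices, giving a submatrix $\mathbf{C}'$ of size $(s+1)\times s$. On the event that (P1) already holds, $\mathbf{C}'$ has rank $s$, so its left null space is exactly one-dimensional, spanned by a single vector $\boldsymbol{\lambda}$; moreover the entries of $\boldsymbol{\lambda}$ can be taken to be the $s\times s$ minors of $\mathbf{C}'$ (with alternating signs) --- that is, $\lambda_i = (-1)^{i}\det(\mathbf{C}'_{\hat{\imath}})$ where $\mathbf{C}'_{\hat{\imath}}$ deletes row $i$ --- and these are polynomials in the entries of $\mathbf{C}$. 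Property (P2) fails for this column set precisely when $\sum_{i=1}^{s+1}\lambda_i = 0$, i.e.\ when the polynomial $P := \sum_{i=1}^{s+1} (-1)^i \det(\mathbf{C}'_{\hat{\imath}})$ vanishes. The crux is to verify that $P$ is not the identically-zero polynomial. I would do this by exhibiting one explicit matrix $\mathbf{C}$ (with entries in $(0,1)$, or at worst in $\mathbb{R}$ --- genericity lets us rescale) for which the unique left-null vector of the chosen $s$ columns has entries summing to a nonzero value; a convenient choice is a Vandermonde-type matrix, for which the null-space vector has a clean closed form and the sum of its coordinates can be computed and seen to be nonzero. Once $P \not\equiv 0$ is established, $\Pr[P = 0] = 0$, and a union bound over the $\binom{m}{s}$ column sets, together with the already-established (P1), finishes the argument.

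One subtlety worth flagging: the expression of $\boldsymbol{\lambda}$ as a vector of minors is only valid when $\mathbf{C}'$ genuinely has rank $s$, which is why I would condition on (P1) first; on the complementary null event there is nothing further to prove. With that ordering of steps the whole proof reduces to the two polynomial-nonvanishing checks, and the probabilistic content is entirely the elementary lemma that the zero locus of a nonzero real polynomial has measure zero.
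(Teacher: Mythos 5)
Your proposal is correct, but it proves (P2) by a genuinely different route than the paper. The paper fixes a set of $s$ columns, exposes the first $s$ rows $\mathbf{c}_1,\ldots,\mathbf{c}_s$ of the resulting $(s+1)\times s$ submatrix $\mathbf{C}'$ (linearly independent with probability $1$), writes $\mathbf{c}_{s+1}=\sum_i\lambda_i'\mathbf{c}_i$, and argues that the coefficient vector $\boldsymbol{\lambda}'$ is a continuous random variable, so $\Pr\bigl[\sum_i\lambda_i'=1\bigr]=0$; any left-null vector of $\mathbf{C}'$ is then shown to have nonzero coordinate sum, and a union bound finishes (for (P1) the paper simply cites \cite{tandon2017gradient}). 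You instead make everything a polynomial-nonvanishing statement: (P1) via determinants of $(s+1)\times(s+1)$ submatrices, and (P2) via the explicit cofactor representation $\lambda_i=(-1)^i\det(\mathbf{C}'_{\hat{\imath}})$ of the (one-dimensional, on the event (P1)) left null space, so that (P2) fails exactly when $P=\sum_i(-1)^i\det(\mathbf{C}'_{\hat{\imath}})$ vanishes; then the zero set of a nonzero polynomial is Lebesgue-null and the uniform distribution is absolutely continuous. Your approach is self-contained (no citation needed for (P1)) and rests on a single standard lemma, while the paper's conditioning argument avoids the cofactor machinery and is shorter given the citation. The only step you leave open, $P\not\equiv 0$, is easier than your Vandermonde plan suggests: $P$ is, up to sign, the cofactor expansion of $\det\bigl([\mathbf{C}'\mid \mathbf{1}_{(s+1)\times 1}]\bigr)$ along the appended all-ones column, so it suffices to exhibit $\mathbf{C}'$ with this augmented matrix nonsingular; taking the rows of $\mathbf{C}'$ to be $\mathbf{e}_1,\ldots,\mathbf{e}_s,\mathbf{0}$ gives null vector $(0,\ldots,0,\pm 1)$ with coordinate sum $\pm 1\neq 0$ (the restriction of entries to $(0,1)$ is immaterial, since only non-identical vanishing of the polynomial over $\mathbb{R}^{(s+1)s}$ is needed). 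With that detail filled in, your argument is complete and reaches the same conclusion.
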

\begin{proof}
  It has been shown in \cite{tandon2017gradient} that $\mathbf{C}$ satisfies property (P1) with
  probability 1. So we only need to show that $\mathbf{C}$ satisfies (P2) with probability 1.

  Consider any submatrix $\mathbf{C}'$ composed by $s$ columns of $\mathbf{C}$.
  Let $\mathbf{c}_1,\ldots,\mathbf{c}_{s+1}$ be the rows of $\mathbf{C}'$. Without loss of
  generality, we assume that the values of $\mathbf{c}_1,\ldots,\mathbf{c}_s$ have been exposed
  and they are independent which holds with probability 1, so that we focus on the randomness
  of $\mathbf{c}_{s+1}$.  Let
  $\boldsymbol{\lambda}'(\mathbf{c}_{s+1})=(\lambda_1'(\mathbf{c}_{s+1}),\ldots,\lambda_s'(\mathbf{c}_{s+1}))$,
  which is unique, such that
  \begin{equation*}
    \mathbf{c}_{s+1}=\lambda_1'(\mathbf{c}_{s+1})\mathbf{c}_1+\lambda_2'(\mathbf{c}_{s+1})\mathbf{c}_2+\cdots+\lambda_s'(\mathbf{c}_{s+1})\mathbf{c}_s.
  \end{equation*}
  We can check that $\boldsymbol{\lambda}'(\mathbf{c}_{s+1})$ is a continuous multivariate random
  variable. So the probability of $\sum_{i=1}^s \lambda_s'(\mathbf{c}_{s+1})\neq 1$ is 1.  On the
  other hand, if $\sum_{i=1}^s \lambda_s'(\mathbf{c}_{s+1})\neq 1$, then for any non-zero vector
  $\boldsymbol{\lambda}=(\lambda_1,\ldots,\lambda_{s+1})\in \mathbb{R}^{s+1}$ such that
  $\boldsymbol{\lambda} \mathbf{C}'=\mathbf{0}_{1\times s}$, $\lambda_{s+1}\neq 0$ and
  $\boldsymbol{\lambda}'(\mathbf{c}_{s+1})=\left(\frac{\lambda_1}{\lambda_{s+1}},\frac{\lambda_2}{\lambda_{s+1}},\ldots,\frac{\lambda_s}{\lambda_{s+1}}\right)$.
   Therefore, $\sum_{i=1}^{s+1}\lambda_i\neq 0$. This implies that the property (P2) restricted
   to the $\mathbf{C}'$ holds with probability 1. Since there are $\binom{s}{m}$ such $\mathbf{C}'$,
   taking a union bound over them shows that property (P2) holds with probability 1.
\end{proof}

The algorithm for constructing $\mathbf{B}$ is given in Alg.\ref{alg:BConstruct}
\begin{algorithm}[htbp]
    \caption{Heter-aware Coding Scheme}
    \label{alg:BConstruct}
    \hspace*{0.02in} {\bf Input: }
        $k$, $supp(\mathbf{B})$ \\
    \hspace*{0.02in} {\bf Output: $\mathbf{B}$}
    \begin{algorithmic}[1]
        \State initialize $\mathbf{B} = zeros(m, k)$
        \For{$i$ in $[s+1]$}
            \For{$j$ in $[m]$}
                \State $\mathbf{C}(i)(j) = random(0,1)$
            \EndFor
        \EndFor
        \For{$i$ in $[k]$}
            \State $\mathbf{b}=zeros(m,1)$
            \State $filter = supp(\mathbf{B})(i)$
            \For{$j$ in $[s+1]$}
                \For{$l$ in $[s+1]$}
                    \State $\mathbf{C}_i(j)(l)=\mathbf{C}(j)(filter(l))$
                \EndFor
            \EndFor
            \State $\mathbf{d}_i' = \mathbf{C}_i^{-1} \mathbf{1}_{(s+1) \times 1}$
            \For{$j$ in $[s+1]$}
                \State $\mathbf{b}(filter(j)) = \mathbf{d}_i'(j)$
            \EndFor
            \State $\mathbf{B}_i = \mathbf{b}$
        \EndFor
        \State \textbf{return} $\mathbf{B}$
    \end{algorithmic}
\end{algorithm}

As a consequence of Lemma~\ref{lem:C1}, Lemma~\ref{lem:Cproperty} and Lemma~\ref{lem:CRandom},
we have the following theorem immediately.
\begin{theorem}
    \label{tm:RobustnessB}
    The matrix $\mathbf{B}$ constructed by Alg. is robust to any $s$ stragglers with probability 1.
\end{theorem}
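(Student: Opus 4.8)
The plan is to observe that Theorem~\ref{tm:RobustnessB} follows by chaining together the three lemmas already established, with the only source of randomness isolated to the choice of the auxiliary matrix $\mathbf{C}$. First I would argue that Algorithm~\ref{alg:BConstruct} is precisely the constructive procedure appearing in the proof of Lemma~\ref{lem:Cproperty}, merely instantiated on a particular random $\mathbf{C}$. Concretely, lines~2--6 populate $\mathbf{C}\in\mathbb{R}^{(s+1)\times m}$ with i.i.d.\ entries uniform on $(0,1)$; lines~7--18 loop over the $k$ columns of the support structure~(\ref{mtx:suppB}), and for column $i$ the variable $filter$ records the $s+1$ row indices where that column is non-zero. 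This is well-defined because, by the cyclic allocation~(\ref{eq:DataAllocation}), every column of~(\ref{mtx:suppB}) has exactly $s+1$ non-zero entries. Lines~9--12 then assemble $\mathbf{C}_i$ as the $(s+1)\times(s+1)$ submatrix of $\mathbf{C}$ formed by those columns, line~13 sets $\mathbf{d}_i'=\mathbf{C}_i^{-1}\mathbf{1}_{(s+1)\times 1}$, and lines~14--17 embed $\mathbf{d}_i'$ into the non-zero positions of the $i$-th column of $\mathbf{B}$; this coincides with the $\mathbf{B}$ built in Lemma~\ref{lem:Cproperty}.

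Second, I would dispatch the single well-definedness issue: line~13 requires $\mathbf{C}_i$ to be invertible. Since $\mathbf{C}_i$ is made of $s+1$ columns of $\mathbf{C}$, whenever $\mathbf{C}$ satisfies property (P1) every $\mathbf{C}_i$ is non-singular, so the algorithm runs to completion and outputs a matrix $\mathbf{B}$ with support structure~(\ref{mtx:suppB}) satisfying $\mathbf{C}\mathbf{B}=\mathbf{1}_{(s+1)\times k}$. If moreover $\mathbf{C}$ satisfies (P2), Lemma~\ref{lem:Cproperty} yields that this $\mathbf{B}$ fulfills Condition~1, whence Lemma~\ref{lem:C1} gives that $\mathbf{B}$ is robust to any $s$ stragglers. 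Thus, on the event that $\mathbf{C}$ has both (P1) and (P2), the deterministic part of the algorithm produces a robust strategy.

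Third, I would invoke Lemma~\ref{lem:CRandom}: a matrix with entries drawn i.i.d.\ uniformly from $(0,1)$ satisfies (P1) and (P2) simultaneously with probability $1$. Since the only randomness in Algorithm~\ref{alg:BConstruct} enters through $\mathbf{C}$, the event above has probability $1$, and on it the output $\mathbf{B}$ is robust to any $s$ stragglers; this proves the theorem. The argument is essentially bookkeeping, and the only point that demands genuine care is the first paragraph — verifying that the index manipulations in the algorithm ($filter$, the double loop assembling $\mathbf{C}_i$, and the embedding loop) really reproduce the submatrix $\mathbf{C}_i$ and the column-embedding used in the proof of Lemma~\ref{lem:Cproperty}, rather than a permuted or transposed variant. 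I therefore expect that identification to be the main (if modest) obstacle.
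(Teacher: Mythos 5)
Your proposal is correct and follows exactly the route the paper intends: the paper simply states that the theorem is an immediate consequence of Lemma~\ref{lem:C1}, Lemma~\ref{lem:Cproperty} and Lemma~\ref{lem:CRandom}, which is precisely the chain you spell out. Your additional bookkeeping — checking that Algorithm~\ref{alg:BConstruct} instantiates the construction from the proof of Lemma~\ref{lem:Cproperty} and that each $\mathbf{C}_i$ is invertible on the probability-one event from Lemma~\ref{lem:CRandom} — is a sound filling-in of details the paper leaves implicit.
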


\subsection{Optimality}

\begin{theorem}
  The gradient coding strategy $\mathbf{B}$ constructed by Alg.\ref{alg:BConstruct} is an optimal
  solution to problem (\ref{OptimizationProblem}) with probability $1$.
\end{theorem}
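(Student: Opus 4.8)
The plan is to pin down the worst‑case completion time of the constructed strategy exactly and then match it with a lower bound valid for \emph{every} feasible strategy. Write $t^{\star}=k(s+1)/\sum_{j=1}^{m}c_{j}$ and let $\mathbf{B}^{\star}$ denote the matrix produced by Alg.~\ref{alg:BConstruct}. I would first show $T(\mathbf{B}^{\star})=t^{\star}$, and then show $T(\mathbf{B}')\ge t^{\star}$ for any $\mathbf{B}'$ satisfying Condition~1. Since $\mathbf{B}^{\star}$ satisfies Condition~1 with probability $1$ by Theorem~\ref{tm:RobustnessB}, these two facts give optimality with probability $1$.

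For the value of $T(\mathbf{B}^{\star})$: by the data‑allocation rule, $\|\mathbf{b}^{\star}_{i}\|_{0}=n_{i}=k(s+1)c_{i}/\sum_{j}c_{j}$, so $t_{i}=n_{i}/c_{i}=t^{\star}$ for \emph{all} $i$, i.e.\ every worker finishes at the same time. For any straggler pattern $\mathcal{S}$ with $|\mathcal{S}|\le s$ there are at least $m-s$ non‑stragglers whose rows span $\mathbf{1}_{1\times k}$ (Condition~1), so the index $j^{\ast}$ in the definition of $T(\mathbf{B}^{\star},\mathcal{S})$ exists; and since $t_{j^{\ast}}=t^{\star}$ regardless of $j^{\ast}$, we get $T(\mathbf{B}^{\star},\mathcal{S})=t^{\star}$ for every admissible $\mathcal{S}$, hence $T(\mathbf{B}^{\star})=t^{\star}$.

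For the lower bound: fix a feasible $\mathbf{B}'$ and relabel the workers so that $t_{1}'\le\cdots\le t_{m}'$ (Condition~1 is invariant under this relabeling). Let $j^{\star\star}$ be the smallest index $j$ for which the prefix collection $\{\mathbf{b}_{1}',\ldots,\mathbf{b}_{j}'\}$ is itself robust to $s$ stragglers, i.e.\ every subset of it of size $\ge j-s$ spans $\mathbf{1}_{1\times k}$; feasibility gives $j^{\star\star}\le m$, and clearly $j^{\star\star}\ge s+1$. I claim $T(\mathbf{B}')=t_{j^{\star\star}}'$. For the direction $T(\mathbf{B}')\le t_{j^{\star\star}}'$: for any admissible $\mathcal{S}$, deleting $\mathcal{S}$ from the first $j^{\star\star}$ rows removes at most $s$ of them, so the survivors still span $\mathbf{1}_{1\times k}$ and therefore $j^{\ast}(\mathcal{S})\le j^{\star\star}$. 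For $T(\mathbf{B}')\ge t_{j^{\star\star}}'$: minimality of $j^{\star\star}$ means $\{\mathbf{b}_{1}',\ldots,\mathbf{b}_{j^{\star\star}-1}'\}$ is \emph{not} robust to $s$ stragglers, so some $\mathcal{S}\subseteq\{1,\ldots,j^{\star\star}-1\}$ with $|\mathcal{S}|\le s$ makes the corresponding survivors fail to span, forcing $j^{\ast}(\mathcal{S})\ge j^{\star\star}$. Finally, a covering argument bounds $t_{j^{\star\star}}'$ from below: each of the $k$ columns must contain at least $s+1$ non‑zeros among rows $1,\ldots,j^{\star\star}$ — otherwise, taking $\mathcal{S}$ to be the $\le s$ rows of the prefix that touch that column leaves every surviving prefix row zero there, contradicting the defining property of $j^{\star\star}$. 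Hence $\sum_{i=1}^{j^{\star\star}}n_{i}'\ge k(s+1)$; using $n_{i}'=t_{i}'c_{i}\le t_{j^{\star\star}}'c_{i}$ for $i\le j^{\star\star}$ gives $t_{j^{\star\star}}'\ge k(s+1)\big/\sum_{i=1}^{j^{\star\star}}c_{i}\ge k(s+1)\big/\sum_{i=1}^{m}c_{i}=t^{\star}$, so $T(\mathbf{B}')=t_{j^{\star\star}}'\ge t^{\star}$.

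The algebra in the first step and the final counting estimate are routine; the main obstacle is the structural identity $T(\mathbf{B}')=t_{j^{\star\star}}'$, which requires isolating the correct ``critical prefix length'' $j^{\star\star}$ and exhibiting the matching worst‑case straggler set — in particular keeping track that a straggler set may legitimately include workers of index larger than $j^{\star\star}$ (harmless to the argument) and that the relabeling by non‑decreasing $t_{i}'$ is exactly what makes $t_{j^{\star\star}}'$ the right quantity to lower‑bound. Once that identity is secured, the per‑partition replication‑$\ge s+1$ observation within the critical prefix closes the gap.
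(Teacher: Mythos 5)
Your proposal is correct, and its overall skeleton is the paper's: achievability at $t^{\star}=k(s+1)/\sum_{j}c_{j}$ for the constructed $\mathbf{B}$ (using Theorem~\ref{tm:RobustnessB} for feasibility with probability $1$), matched against a counting lower bound coming from the fact that every data partition must be replicated on at least $s+1$ workers that finish by the decoding time. Where you differ is in how the converse is executed. The paper starts from an optimal $\mathbf{B}^{*}$ and argues that any worker with $\|\mathbf{b}_i^{*}\|_0/c_i>T(\mathbf{B}^{*})$ is never needed for the earliest decoding, deletes its assignment while asserting feasibility and optimality are preserved, and then sums $\|\mathbf{b}_i\|_0\ge(s+1)k$ over all $m$ workers before dividing by $\sum_i c_i$. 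You instead take an arbitrary feasible $\mathbf{B}'$, sort by completion time, isolate the critical prefix length $j^{\star\star}$ (shortest prefix that is itself robust to $s$ stragglers), prove $T(\mathbf{B}')=t'_{j^{\star\star}}$ by exhibiting a worst-case straggler set inside the prefix, and count the $\ge s+1$ copies of each partition within that prefix only, giving $t'_{j^{\star\star}}\ge k(s+1)/\sum_{i\le j^{\star\star}}c_i\ge t^{\star}$. The two devices serve the same purpose---restricting the work count to workers that finish by time $T$---but yours avoids the paper's unverified (though true) claim that zeroing out a late worker's row preserves Condition~1 and optimality, and it yields the slightly sharper statement that the $s+1$ replicas must already reside on workers finishing by $T(\mathbf{B}')$; the paper's route is shorter, yours is more self-contained. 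Two minor remarks: in your first step it suffices to note $T(\mathbf{B}^{\star})\le t^{\star}$ (exact equality of all $t_i$ tacitly assumes the embedded coefficients are nonzero, which indeed holds with probability $1$), and the paper's displayed bound $\sum_{i=1}^{k}\|\tilde{\mathbf{b}}_i^{*}\|_0\ge(s+1)k$ has a typo: the sum should run over the $m$ rows, which is exactly the quantity your prefix version refines.
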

\begin{proof}
  Let $\mathbf{B}^*$ be an optimal gradient coding strategy. Let $\mathbf{b}_i^*$ be the $i$-th
  row of $\mathbf{B}^*$. If there exists some $i$ such that
  $T(\mathbf{B}^*)<\frac{||\mathbf{b}_i^*||_0}{c_i}$, then according to the definition of
  $T(\mathbf{B}^*)$ (c.f. Eq. (\ref{eq:TDefinition})), the result of worker $W_i$ is useless
  for earliest successful decoding whatever the straggler pattern is. Hence, we can remove
  the assignment of data partitions to worker $W_i$, which does not affect the straggler
  tolerance and the computation time of the whole task. In other words, this is still
  an optimal gradient coding strategy. Hence, we can conclude that there exists an optimal
  gradient coding strategy
  $\tilde{\mathbf{B}}^*$ such that
   \begin{equation*}
     T(\tilde{\mathbf{B}}^*)\geq \frac{||\tilde{\mathbf{b}}_i^*||_0}{c_i}, \text{ for any }i=1,2,\ldots,m,
   \end{equation*}
   where $\tilde{\mathbf{b}}_i$ is the $i$-th row of $\tilde{\mathbf{B}}^*$. Now we have
   \begin{equation*}
     T(\tilde{\mathbf{B}}^*)\geq \frac{\sum_{i=1}^m ||\tilde{\mathbf{b}}_i^*||_0}{\sum_{i=1}^m c_i}.
   \end{equation*}
   On the other hand, in order to tolerate $s$ straggler, each data partition has to be
   assigned to at least $s+1$ workers. This implies that
   \begin{equation*}
     \sum_{i=1}^k ||\tilde{\mathbf{b}}_i^*||_0\geq (s+1)k.
   \end{equation*}
   Hence,
   \begin{equation*}
     T(\tilde{\mathbf{B}}^*)\geq \frac{(s+1)k}{\sum_{i=1}^m c_i}.
   \end{equation*}
For our construction $\mathbf{B}$, we can see that every worker completes its local task
in $\frac{(s+1)k}{\sum_{i=1}^m c_i}$ time according to Eq. (2). Hence,
$T(B)=\frac{(s+1)k}{\sum_{i=1}^m c_i}$, which implies that $\mathbf{B}$ is optimal.
\end{proof}

\section{Group-based Coding Scheme}
    Based on the sampling throughput $c_i$ of each worker $W_i$, we have proposed an
optimal solution for problem (\ref{OptimizationProblem}) on the above section. However,
$c_i$ in practical system is hard to be measured exactly because of tiny fluctuation
in runtime. This leads to that coding scheme could hardly acheive optimal.
In fact, we could further improve the performance by reducing the number of
workers $|\mathcal{A}|$ needed by recovering gradient. This is because (1)
if $\mathcal{A}_1 \subset \mathcal{A}_2$, then $T_{\mathcal{A}_1} \leq T_{\mathcal{A}_2}$
with probability $1$ where $T_\mathcal{A}$ is the recovering time from active
workers $\mathcal{A}$ and (2) from lemma \ref{lem:Cproperty}, we could directly
conclude that recovering gradient from $\mathbf{B}$ constructed by
Alg.\ref{alg:BConstruct} needs $m-s$ workers given $s$ stragglers.
In the follows, we show that $|\mathcal{A}|$ could be reduced by finding groups, where
a group consists of at most $m-s$ workers and can be used to recover gradient.
Denote a group as $\mathcal{G}$, then the following conditions are desired to satisfy
requirement
\begin{itemize}
    \item ($\star$): for all workers $W_{z_i} \in \mathcal{G}, i=1,2,\ldots,p$, their sets of
    data partitions satisfy
        \begin{equation*}
            \bigcap_{i=1}^p \mathcal{D}_{z_i} = \emptyset, \quad
            \bigcup_{i=1}^p \mathcal{D}_{z_i} = \mathcal{D}
        \end{equation*}
    \item ($\star\star$): for all groups $\mathcal{G}_i$ in $\mathbf{B}$, $i=1,2,\ldots,P$,
        \begin{equation*}
            \bigcap_{i=1}^P \mathcal{G}_i = \emptyset
        \end{equation*}
\end{itemize}

\begin{algorithm}[htbp]
    \caption{Find Groups}
    \label{alg:FindGroups}
    \hspace*{0.02in} {\bf Input: $\mathcal{D}, \mathcal{W}$} \\
    \hspace*{0.02in} {\bf Output: $\mathcal{P}$}
    \begin{algorithmic}[1]
    \State $\mathcal{P}$ = FindAllGroups($\mathcal{D}, \mathcal{W}$)
    \State $\mathcal{P}$ = PruneGroups($\mathcal{P}$)
    \State \textbf{return} $\mathcal{P}$
    \\
    \Function{FindAllGroups}{$\mathcal{D}, \mathcal{W}$}
        \State initialize groups set $\mathcal{P} = \{\}$
        \State $\mathcal{W}_c = \mathcal{W}.clone()$
        \For{$W_i$ in $\mathcal{W}_c$}
            \State $\mathcal{W} -= W_i$
            \If{$\mathcal{D}_i \subset \mathcal{D}$}
                \State $\mathcal{D}_r = \mathcal{D} - \mathcal{D}_i$
                \State $\mathcal{P}_s$ = FindAllGroups($\mathcal{D}_r, \mathcal{W}, \mathcal{D}_i$)
                \For{subgroup $\mathcal{G}$ in $\mathcal{P}_s$}
                    \State $\mathcal{G} +=W_i$
                    \State $\mathcal{P} += \mathcal{G}$
                \EndFor
            \ElsIf{$\mathcal{D}_i = \mathcal{D}$}
                \State $\mathcal{G} = \{W_i\}$
                \State $\mathcal{P} += \mathcal{G}$
            \Else
                \State pass;
            \EndIf
        \EndFor
        \State \textbf{return} $\mathcal{P}$
    \EndFunction
    \\
    \Function{PruneGroups}{$\mathcal{P}$}
        \While{$\mathcal{P}$ doesn't satisfy condition ($\star\star$)}
            \State find $\mathcal{G}$ that intersects most groups
            \State $\mathcal{P} -= \mathcal{G}$
        \EndWhile
        \State \textbf{return} $\mathcal{P}$
    \EndFunction
    \end{algorithmic}
\end{algorithm}

As shown in Alg.\ref{alg:FindGroups}, all groups are found in a recursive way to satisfy
condition ($\star$), and then several groups are pruned to satisfy condition ($\star\star$). After
finding groups, we just set non-zero elements in $\mathbf{B}$ corresponding workers $W_j$
in groups $\mathcal{G}_i$ to be $1$. Besides, consider a group set
$E = \{ W_j \mid W_j \mathcal{G}_i, i=1,2,\ldots, P\}$. Let $\mathbf{B}_{\mathbf{\bar{E}}}$
be the submatrix composed by all the $j$-th rows of $\mathbf{B}$ where worker $W_j \notin E$.
Obviously, $\mathbf{B}$ can be constructed as long as the submatrix
$\mathbf{B}_{\mathbf{\bar{E}}}$ is solved. $\mathbf{B}_{\mathbf{\bar{I}}}$ can be
constructed by using Alg.\ref{alg:BConstruct} under $s=m-P$ stragglers.

A little different from decoding function for $\mathbf{B}$ constructed by
Alg.\ref{alg:BConstruct}, the decoding matrix $\mathbf{A}$ are constructed for workers
in groups and workers not in groups separately. For workers in each group $\mathcal{G}_i$,
we design each corresponding decoding vector $\mathbf{a}_i \in \mathbb{R}^m$ as
$\mathbf{a}_i = [\mathbf{1}_{\mathcal{G}_i}(W_1),\cdots,\mathbf{1}_{\mathcal{G}_i}(W_m)]$,
where $\mathbf{1}_{\mathcal{G}_i}$ is the indicator function. Obviously, we have
    \begin{equation}
        \label{eq:OneGroupDecoding}
        \mathbf{a}_i \mathbf{B} = \mathbf{1}, \quad, \|\mathbf{a}_i\|_0 \leq m-s
    \end{equation}

\begin{figure*}[htbp]
    \centering
    \subfloat[s=1 Straggler]{
        \label{fig:fig1_a}
        \begin{minipage}[t]{240pt}
            \centering
            \includegraphics[width = 240pt, height = 200pt]{./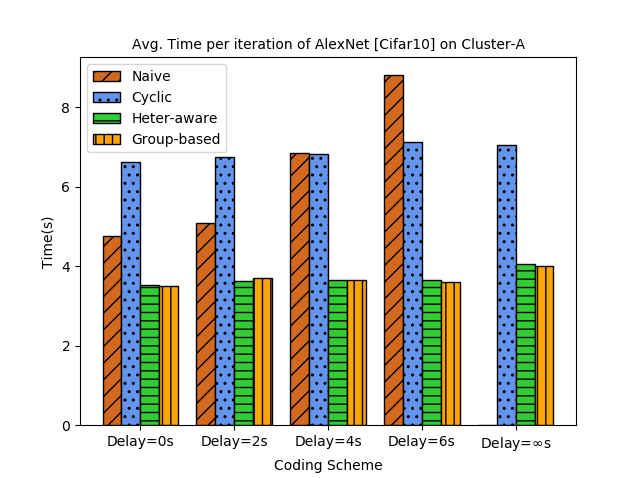}
        \end{minipage}
    }
    \subfloat[s=2 Stragglers]{
        \label{fig:fig1_b}
        \begin{minipage}[t]{240pt}
            \centering
            \includegraphics[width = 240pt, height = 200pt]{./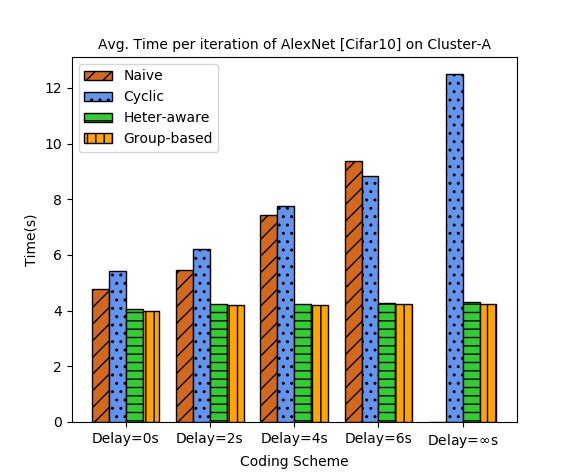}
        \end{minipage}
    }
    \caption{Avg. time per iteration of different coding schemes running on Cluster-A with
    $s=1$ and $s=2$ stragglers. The stragglers are created artificially by adding delay to the workers.
    The results show that our proposed \textit{heter-aware} and \textit{group-based} gradient coding
    scheme performs best without regarding to the delays.}
    \label{fig:fig1}
\end{figure*}

Consequently, a decoding submatrix denoted by $\mathbf{A}_1$ is composed by decoding vectors
for all groups. As to workers not in groups, the decoding submatrix $\mathbf{A}_2$ is solved
by $\mathbf{B}_{\mathbf{\bar{E}}}$ according to (\ref{eq:Aconstruct}).

The algorithm for constructing $\mathbf{B}$ and solving decoding matrix $\mathbf{A}$ is shown
in Alg.\ref{alg:GroupBasedCoding}.

According to Theorem \ref{tm:RobustnessB}, we could have the following theorem.
\begin{theorem}
    \label{tm:GroupRobustness}
        The matrix $\mathbf{B}$ constructed by Alg.\ref{alg:GroupBasedCoding} is
    robust to any $s$ stragglers with probabilty $1$.
\end{theorem}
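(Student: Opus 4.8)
The goal is to verify that the matrix $\mathbf{B}$ produced by Alg.~\ref{alg:GroupBasedCoding} satisfies Condition~1 of Lemma~\ref{lem:C1}: for every straggler set $\mathcal{S}\subseteq\mathcal{W}$ with $|\mathcal{S}|\le s$, one has $\mathbf{1}_{1\times k}\in\mathbf{span}(\{\mathbf{b}_i\mid W_i\notin\mathcal{S}\})$. Write $E$ for the set of workers that lie in some group and $\bar{E}=\mathcal{W}\setminus E$ for the rest. The plan is a two-case argument according to whether the straggler pattern destroys every group.

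Case 1: some group $\mathcal{G}_i$ is straggler-free, i.e. $\mathcal{G}_i\cap\mathcal{S}=\emptyset$. By condition $(\star)$ the data-partition sets of the workers in $\mathcal{G}_i$ partition $\mathcal{D}$; since Alg.~\ref{alg:GroupBasedCoding} fills the non-zero positions of those rows with $1$'s placed exactly on $\{\ell\mid D_\ell\in\mathcal{D}_z\}$, the rows of $\mathcal{G}_i$ sum to $\mathbf{1}_{1\times k}$. All of these rows survive $\mathcal{S}$, so the all-ones vector lies in their span — this is exactly the decoding vector of Eq.~(\ref{eq:OneGroupDecoding}) — and Condition~1 holds for this $\mathcal{S}$.

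Case 2: every group meets $\mathcal{S}$. Here I would use condition $(\star\star)$ (which I take to mean the $P$ groups are pairwise disjoint): picking one straggler from each group yields $P$ distinct workers, all lying in $E$, hence $|\mathcal{S}\cap E|\ge P$ and so $|\mathcal{S}\cap\bar{E}|\le s-P$. By construction $\mathbf{B}_{\bar{E}}$ is itself an output of Alg.~\ref{alg:BConstruct} — a cyclic allocation over the $\bar{E}$ workers covering all $k$ partitions with its prescribed redundancy, at least $s-P+1$ copies per partition — so Theorem~\ref{tm:RobustnessB} applies to $\mathbf{B}_{\bar{E}}$ and it is robust, with probability $1$, to $s-P$ stragglers. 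Since $\mathcal{S}$ removes at most $s-P$ of the $\bar{E}$ workers, the surviving rows among $\bar{E}$ already span $\mathbf{1}_{1\times k}$, and a fortiori so does the full surviving set. The group rows are deterministic, so the only event that can fail is the probability-$1$ event underlying Theorem~\ref{tm:RobustnessB}; combining the two cases therefore gives robustness to any $s$ stragglers with probability $1$.

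The step I expect to be the main obstacle is the bookkeeping at the seam between the two parts: one must be sure that in Case 2 the $P$ absorbed stragglers are genuinely taken from $E$ and never from $\bar{E}$, and that $\mathbf{B}_{\bar{E}}$ is indeed instantiated by Alg.~\ref{alg:BConstruct} with a straggler parameter at least $s-P$ and a covering cyclic allocation, so that Theorem~\ref{tm:RobustnessB} is legitimately invoked; the degenerate regimes (many or large groups, so that $\bar{E}$ is small or empty) also have to be addressed here. Both points hinge on reading $(\star)$ and $(\star\star)$ as genuine disjointness statements rather than the weaker ``empty common intersection'', so I would pin those down first and then the remainder is routine linear algebra.
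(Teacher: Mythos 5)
Your proposal is correct and follows essentially the same route as the paper's (much terser) proof: a surviving group decodes directly via the all-ones decoding vector of Eq.~(\ref{eq:OneGroupDecoding}), while if every group is hit, disjointness of the groups forces at least $P$ stragglers inside $E$, leaving at most $s-P$ stragglers for $\mathbf{B}_{\bar{E}}$, which is handled by Theorem~\ref{tm:RobustnessB} with probability $1$. The points you flag are resolved the way you anticipate: the paper indeed intends $(\star)$ and $(\star\star)$ as pairwise-disjointness conditions (as the pruning step and the worked example show), and since each partition has $s+1$ copies of which the $P$ disjoint groups account for exactly $P$, the rows indexed by $\bar{E}$ cover each partition exactly $s-P+1$ times, so Alg.~\ref{alg:BConstruct} is legitimately invoked with straggler parameter $s-P$.
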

\begin{proof}
   According to both condition ($\star$) and ($\star\star$) of groups and
Theorem \ref{tm:RobustnessB}, we could easily know that
$\mathbf{B}_{\mathbf{\bar{E}}}$ are robust to $s-P$ stragglers with
probability $1$. Besides, all $P$ groups can be used to recover gradient as
showed in \label{eq:OneGroupDecoding}. Hence, $\mathbf{B}$ is robust to $s$
stragglers with probabilty $1$.
\end{proof}

From this theorem, we could know that the group-based coding scheme is also an optimal
solution for problem (\ref{OptimizationProblem}) for that the computation time of each
active worker is the same like of worker in coding scheme \ref{alg:BConstruct} under
deterministic situation.

\begin{example}
    An example is shown as in the following support structure $\mathbf{B}_{7 \times 4}$ of $7$
workers. There are three groups, $\mathcal{G}_1$ including workers $W_1, W_2, W_3$,
$\mathcal{G}_2$ including workers $W_3, W_4$ and $\mathcal{G}_3$ including workers $W_2, W_5$.
Laterly, system prunes $\mathcal{G}_1$ to satisfy condition ($\star\star$). For constructing
$\mathbf{B}$, all entries of workers $W_2, W_3, W_4, W_5$ that in groups are set to be $1$,
and the remained entries of $\mathbf{B}$ and are solved by using Alg.\ref{alg:BConstruct}.

\begin{equation*}
    supp(\mathbf{B}_{7 \times 4})=
    \begin{bmatrix}
        \star & \star & 0 & 0\\
        0 & 0 & \star & 0  \\
        0 & 0 &  0 & \star \\
        \star & \star & \star & 0 \\
        \star & \star & 0 & \star \\
        \star & 0 & \star & \star \\
        0 & \star & \star & \star
    \end{bmatrix}, \quad
    \mathbf{B}_{7 \times 4}=
    \begin{bmatrix}
        \star & \star & 0 & 0\\
        0 & 0 &  1 & 0  \\
        0 & 0 &  0 & 1 \\
        1 & 1 &  1 & 0 \\
        1 & 1 &  0 & 1 \\
        \star & 0 & \star & \star \\
        0 & \star & \star & \star
    \end{bmatrix}
\end{equation*}
\end{example}

\begin{algorithm}[htbp]
    \caption{Group-Detection Coding Scheme}
    \label{alg:GroupBasedCoding}
    \hspace*{0.02in} {\bf Input: }
        $supp(\mathbf{B}), \mathcal{P}$ \\
    \hspace*{0.02in} {\bf Output: }
        $\mathbf{A}, \mathbf{B}$
    \begin{algorithmic}[1]
        \State initialize $\mathbf{A}_1=[\ ]$
        \State $\mathbf{B}_{E} = 1$
        \For{$\mathcal{G}_i$ in $\mathcal{P}$}
            \State $\mathbf{a}_i = [\mathbf{1}_{\mathcal{G}_i}(W_1),\cdots,\mathbf{1}_{\mathcal{G}_i}(W_m)]$
            \State $\mathbf{A}_1.append(\mathbf{a}_i)$
        \EndFor
        \State solve $\mathbf{B}_{\bar{E}}$ via Alg.\ref{alg:BConstruct}
        \State solve $\mathbf{A}_2$ by $\mathbf{B}_{\bar{E}}$
        \State $\mathbf{A}$ = merge($\mathbf{A}_1, \mathbf{A}_2$)
        \State $\mathbf{B}$ = merge($\mathbf{B}_{E}, \mathbf{B}_{\bar{E}}$)
        \State \textbf{return} $\mathbf{A}, \mathbf{B}$
    \end{algorithmic}
\end{algorithm}

\section{Performance Evaluations}
In this section, experiments are presented to show the results of our coding scheme.
Our coding scheme was mainly compared to two schemes: 1) \textit{Naive} scheme.
In \textit{naive} scheme, the whole dataset was divided uniformly on each worker and server
makes an update step by waiting for the completion of all workers. 2) \textit{Cyclic} coding
scheme \cite{tandon2017gradient}. \textit{Cyclic} coding scheme uniformly divides the dataset
into $m$ data partitions and makes $s+1$ copies of each data partition, and each worker
computes $s+1$ data partitions. We didn't implement fractional repetition scheme
and partial coding scheme in \cite{tandon2017gradient}, because fractional repetition scheme
not only has a great limitation that requires that the number of worker $m$ is divisible by
$s+1$ but also its performance is comparable to cyclic coding scheme and as to partial coding
scheme, it a strong assumption that the slowest worker is at most $\alpha$ slower than the
fastest worker causing that it is unable to tolerate corrupted workers.

\begin{figure*}[htbp]
    \centering
    \subfloat[Cluster-B]{
        \label{fig:fig2_a}
        \begin{minipage}[t]{160pt}
            \centering
            \includegraphics[width = 180pt, height=160pt]{./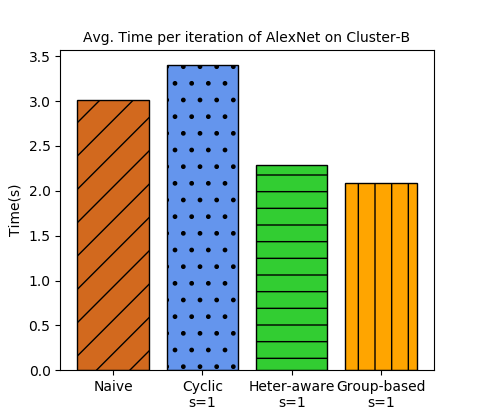}
        \end{minipage}
    }
    \subfloat[Cluster-C]{
        \label{fig:fig2_b}
        \begin{minipage}[t]{160pt}
            \centering
            \includegraphics[width = 180pt, height=160pt]{./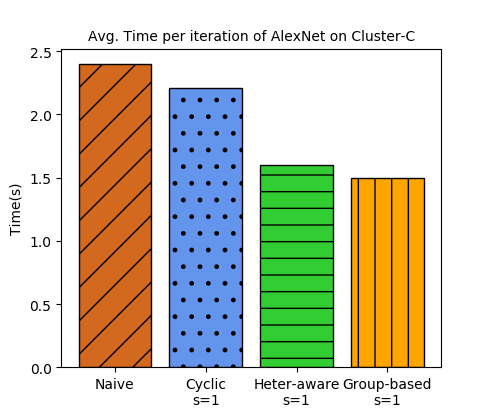}
        \end{minipage}
    }
    \subfloat[Cluster-D]{
        \label{fig:fig2_c}
        \begin{minipage}[t]{160pt}
            \centering
            \includegraphics[width = 180pt, height=160pt]{./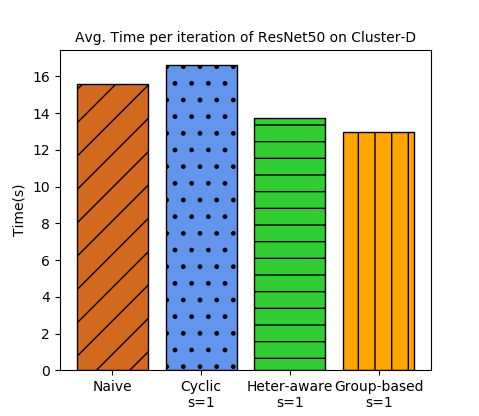}
        \end{minipage}
    }
    \caption{Avg.time per iteration on different clusters. Our coding schemes perform best in
    all clusters with different configurations.}
    \label{fig:fig2}
\end{figure*}

\begin{table}[htbp]
    \centering
    \caption{Cluster Configurations}
    \label{ClusterConf}
    \begin{tabular}{|c|c|c|c|c|}
        \hline
        number of vCPUs  & Cluster-A & Cluster-B & Cluster-C & Cluster-D \\
        \hline
        2-vCPUs          & 2         &  2        & 1         &  0 \\
        \hline
        4-vCPUs          & 2         &  4        & 4         &  4 \\
        \hline
        8-vCPUs          & 3         &  8        & 10         & 20 \\
        \hline
        12-vCPUs         & 1         &  0        & 12         & 18 \\
        \hline
        16-vCPUs         & 0         &  2        & 5         &  16 \\
        \hline
    \end{tabular}
\end{table}

\textbf{Experiment Setup.} Based on QingCloud \cite{QingCloud}, we make evaluations on
various heterogeneous clusters with different scales ranging from $8$ workers to $48$ workers.
We design four clusters including \textbf{Cluster-A}, \textbf{Cluster-B}, \textbf{Cluster-C} and
\textbf{Cluster-D} as shown in Table \ref{ClusterConf}. Such design mainly aims to cover various
scales and heterogeneity of cluster to show the generality of our coding scheme. The instance
type is performance type, and operating system of all the nodes is 64-bit Centos7.1.
PyTorch \cite{pytorch} is adopted as the platform.

\textbf{Workload.} Two typical image classification datasets Cifar10 \cite{krizhevsky2012imagenet}
and ImageNet \cite{deng2009imagenet} are adopted. Cifar10 is composed of $50,000$ $32 \times 32$
training images on which we train AlexNet \cite{krizhevsky2012imagenet}, and ImageNet consists
of over $1$ million images on which we train ResNet34 \cite{he2016deep}.

\textbf{Metrics.} System efficiency is measured by running time to show the overall efficiency of
distributed learning system. It consists of statistical efficiency and hardware efficiency.
Statistical efficiency measures the convergence rate of the learning algorithm can be shown by
learning curve. Hardware efficiency is a metric that represents the efficiency effcient CPU
resource usage.

\subsection{Experimental Results}
\subsubsection{Robustness to Stragglers}
    By simulating faults, we add extra delay to any $s$ random workers on \textbf{Cluster-A} to
show both the performance improvement and the ability of straggler tolerance of our coding scheme. We
artificially generate $1$ stragglers and $2$ stragglers as shown in Fig.\ref{fig:fig1_a} and
Fig.\ref{fig:fig1_b}. As expected, running time of \textit{naive} increases with the increasing
of delay and could not normally run as workers take place faults. Correspondingly, all coding schemes
are designed for $1$ straggler in Fig.\ref{fig:fig1_a} and for $2$ stragglers in Fig.\ref{fig:fig1_b}.
Different from \textit{naive} distributed learning algorithm, \textit{cyclic} algorithm could tolerate
stragglers that the running time changes little to different delays as shown in Fig.\ref{fig:fig1_a}.
However, the running time of \textit{cyclic} algorithm also increases with the increasing of delay.
This is mainly because the performance of \textit{cyclic} is mainly limited to workers with
low-computing capacity, and it approaches the performance of low-computing workers as delay increases
until reaches the lower bound as delay is infinite (faults take place). Compared to these two
distributed algorithm, both our \textit{heter-aware} coding scheme and \textit{group-based} coding
scheme are all robust to stragglers that the running time keeps almost unchanged as shown in
Fig.\ref{fig:fig1_a} and Fig.\ref{fig:fig1_b}. When the fault takes place, our \textit{heter-aware}
coding scheme even acheives $3\times$ speedup compared to \textit{cyclic} coding scheme because of
high computing resource usage.

\begin{figure}[htbp]
    \centering
    \label{fig:fig3_a}
    \includegraphics[width = 240pt, height=200pt]{./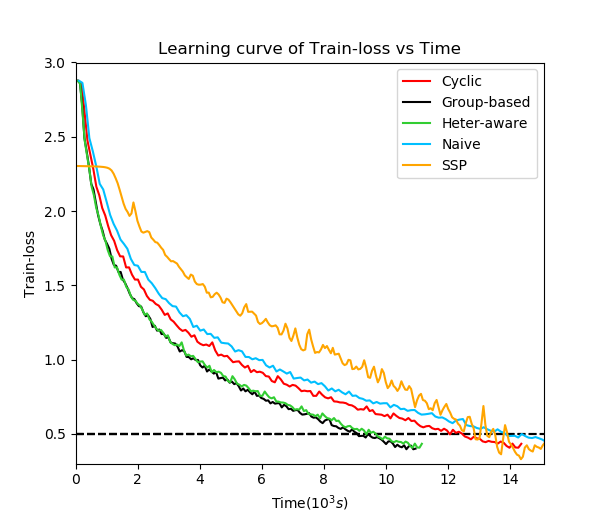}
    \caption{Training Loss curve of different learning schemes on Cluster-C. \textit{Group-based}
    coding scheme has the best convergence efficiency, and then \textit{heter-aware} coding
    scheme. \textit{Cyclic} coding scheme could only have a little better efficiency than
    \textit{Naive} learning method due to insufficient workload allocation. SSP performs
    worst in such heterogeneous setting due to consistent straggler and poor convergence rate.
    }
    \label{fig:fig3}
\end{figure}

\subsubsection{Efficiency under different clusters}
    To show generality and efficiency of our coding scheme, we extend experiments to a large range of
clusters with different scales and computing configurations as Cluster-B, Cluster-C and Cluster-D.
The results are shown as in Fig.\ref{fig:fig2}. Obviously, \textit{heter-aware} and \textit{group-based}
coding scheme acheive better performance than the other methods on each cluster of different
configurations. On the other side, traditional \textit{cyclic} coding scheme even makes performance
worse for that it aggreggates the straggler problem by allocating equivalent workload to each worker
with different computing capacity.

    Besides, one most notable advantage of coding based methods is that they have better
statistical efficiency by using BSP. This is not true in asynchronous  learning
algorithm, which is deeply discussed in \cite{chen2016revisiting}. We here
validate the efficiency of our learning method compared to SSP, a notablely effcient
asynchronous distributed learning algorithm. The result is shown as in Fig.\ref{fig:fig3}.
Due to heterogeneous computing capacity of workers, SSP will in fact easily reach the
staleness threshold nearly every step causing that the synchronization overhead is similar to
\textit{Naive} BSP learning algorithm. Besides, master receives unbalanced contributions from
different data parts to the update of parameters due to the dicrepancy of workers causing that
SSP has a lower convergence rate compared BSP. Consequently, our coding scheme converges
smoother and faster than SSP as shown in Fig.\ref{fig:fig3}.

\begin{figure}[htbp]
    \centering
    \label{fig:fig4_a}
    \includegraphics[width = 240pt, height=200pt]{./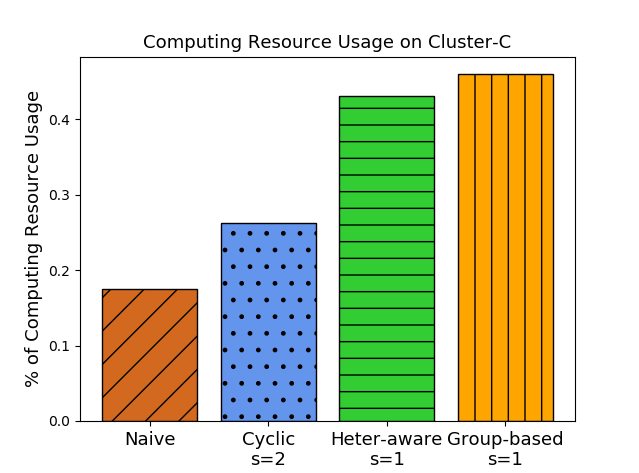}
    \caption{Computing resource usage of different coding schemes. Computing resource usage of
    \textit{group-based} coding scheme is the best among all coding schemes.}
    \label{fig:fig4}
\end{figure}

    At last, we have a discussion at the hardware efficiency of our coding scheme. We use computing
resource usage as the metric. Resource usage is caculated by average iteration:
$$resource\_usage = \frac{\sum_{i \in workers}computing\_time_{i}}{\sum_{i \in workers}total\_time_{i}}$$
As we can see, \textit{Naive} has a resource usage lower than $20\%$ in Fig.\ref{fig:fig4}. This is
incurred by low-computing capacity workers and many other factors, e.g., background
interferring process and fluctuate network. \textit{Cylic} coding scheme mitigates this
problem by discarding stragglers. However, it still has a limits incurred by unbalanced distribtuion
of computing resource. Our \textit{heter-aware} coding scheme and \textit{group-based} coding scheme
solve all these two problems and acheive high resource usage. Though still half of resouce is idle
due to communication overhead, this can be solved by combined techniques proposed by
\cite{zhang2017poseidon} that code gradients layer by layer.

\section{Conclusion}
    To tolerate stragglers and take fully advantage of computing resources, we propose two new
coding schemes in this paper, \textit{heter-aware} and \textit{group-based} coding scheme.
Traditional coding methods proposed by \cite{tandon2017gradient} could efficiently mitigate
stragglers, especially for fault tolerance, but their equivalent data allocation mechanism
causes that they have bad performance in heterogeneous clusters. Considering these, our
coding schemes take both stragglers and heterogeneity into account to tolerate stragglers
by firstly allocating data partitions to workers according to their processing speed
and then designing corresponding coding strategy. Evaluations show that our coding schemes
could acheive up to $3\times$ speedup compared to \textit{cyclic} coding scheme.

\bibliographystyle{ieeetr}
\bibliography{reference}

\begin{thebibliography}{10}

\bibitem{DBLP:journals/cacm/DeanG08}
J.~Dean and S.~Ghemawat, ``Mapreduce: simplified data processing on large
  clusters,'' {\em Commun. {ACM}}, vol.~51, no.~1, pp.~107--113, 2008.

\bibitem{DBLP:conf/hotcloud/ZahariaCFSS10}
M.~Zaharia, M.~Chowdhury, M.~J. Franklin, S.~Shenker, and I.~Stoica, ``Spark:
  Cluster computing with working sets,'' in {\em 2nd {USENIX} Workshop on Hot
  Topics in Cloud Computing, HotCloud'10, Boston, MA, USA, June 22, 2010},
  2010.

\bibitem{dean2013tail}
J.~Dean and L.~A. Barroso, ``The tail at scale,'' {\em Communications of the
  ACM}, vol.~56, no.~2, pp.~74--80, 2013.

\bibitem{yan2016tr}
Y.~Yan, Y.~Gao, Y.~Chen, Z.~Guo, B.~Chen, and T.~Moscibroda, ``Tr-spark:
  Transient computing for big data analytics,'' in {\em Proceedings of the
  Seventh ACM Symposium on Cloud Computing}, pp.~484--496, ACM, 2016.

\bibitem{harlap2017proteus}
A.~Harlap, A.~Tumanov, A.~Chung, G.~R. Ganger, and P.~B. Gibbons, ``Proteus:
  agile ml elasticity through tiered reliability in dynamic resource markets,''
  in {\em Proceedings of the Twelfth European Conference on Computer Systems},
  pp.~589--604, ACM, 2017.

\bibitem{jiang2017heterogeneity}
J.~Jiang, B.~Cui, C.~Zhang, and L.~Yu, ``Heterogeneity-aware distributed
  parameter servers,'' in {\em Proceedings of the 2017 ACM International
  Conference on Management of Data}, pp.~463--478, ACM, 2017.

\bibitem{DBLP:conf/nsdi/AnanthanarayananGSS13}
G.~Ananthanarayanan, A.~Ghodsi, S.~Shenker, and I.~Stoica, ``Effective
  straggler mitigation: Attack of the clones,'' in {\em Proceedings of the 10th
  {USENIX} Symposium on Networked Systems Design and Implementation, {NSDI}
  2013, Lombard, IL, USA, April 2-5, 2013}, pp.~185--198, 2013.

\bibitem{zhang2014dynamic}
Q.~Zhang, M.~F. Zhani, R.~Boutaba, and J.~L. Hellerstein, ``Dynamic
  heterogeneity-aware resource provisioning in the cloud,'' {\em IEEE
  transactions on cloud computing}, vol.~2, no.~1, pp.~14--28, 2014.

\bibitem{ananthanarayanan2014grass}
G.~Ananthanarayanan, M.~C.-C. Hung, X.~Ren, I.~Stoica, A.~Wierman, and M.~Yu,
  ``Grass: trimming stragglers in approximation analytics,'' 2014.

\bibitem{DBLP:conf/nips/CutkoskyB18}
A.~Cutkosky and R.~Busa{-}Fekete, ``Distributed stochastic optimization via
  adaptive {SGD},'' in {\em Advances in Neural Information Processing Systems
  31: Annual Conference on Neural Information Processing Systems 2018, NeurIPS
  2018, 3-8 December 2018, Montr{\'{e}}al, Canada.}, pp.~1914--1923, 2018.

\bibitem{DBLP:journals/corr/KingmaB14}
D.~P. Kingma and J.~Ba, ``Adam: {A} method for stochastic optimization,'' {\em
  CoRR}, vol.~abs/1412.6980, 2014.

\bibitem{tandon2017gradient}
R.~Tandon, Q.~Lei, A.~G. Dimakis, and N.~Karampatziakis, ``Gradient coding:
  Avoiding stragglers in distributed learning,'' in {\em International
  Conference on Machine Learning}, pp.~3368--3376, 2017.

\bibitem{maity2018robust}
R.~K. Maity, A.~S. Rawat, and A.~Mazumdar, ``Robust gradient descent via moment
  encoding with ldpc codes,'' {\em arXiv preprint arXiv:1805.08327}, 2018.

\bibitem{zhao2014improving}
X.~Zhao, L.~Liu, Q.~Zhang, and X.~Dong, ``Improving mapreduce performance in a
  heterogeneous cloud: A measurement study,'' in {\em Cloud Computing (CLOUD),
  2014 IEEE 7th International Conference on}, pp.~400--407, IEEE, 2014.

\bibitem{smola2010architecture}
A.~Smola and S.~Narayanamurthy, ``An architecture for parallel topic models,''
  {\em Proceedings of the VLDB Endowment}, vol.~3, no.~1-2, pp.~703--710, 2010.

\bibitem{dean2012large}
J.~Dean, G.~Corrado, R.~Monga, K.~Chen, M.~Devin, M.~Mao, A.~Senior, P.~Tucker,
  K.~Yang, Q.~V. Le, {\em et~al.}, ``Large scale distributed deep networks,''
  in {\em Advances in neural information processing systems}, pp.~1223--1231,
  2012.

\bibitem{ho2013more}
Q.~Ho, J.~Cipar, H.~Cui, S.~Lee, J.~K. Kim, P.~B. Gibbons, G.~A. Gibson,
  G.~Ganger, and E.~P. Xing, ``More effective distributed ml via a stale
  synchronous parallel parameter server,'' in {\em Advances in neural
  information processing systems}, pp.~1223--1231, 2013.

\bibitem{cipar2013solving}
J.~Cipar, Q.~Ho, J.~K. Kim, S.~Lee, G.~R. Ganger, G.~Gibson, K.~Keeton, and
  E.~P. Xing, ``Solving the straggler problem with bounded staleness.,'' in
  {\em HotOS}, vol.~13, pp.~22--22, 2013.

\bibitem{cui2014exploiting}
H.~Cui, J.~Cipar, Q.~Ho, J.~K. Kim, S.~Lee, A.~Kumar, J.~Wei, W.~Dai, G.~R.
  Ganger, P.~B. Gibbons, {\em et~al.}, ``Exploiting bounded staleness to speed
  up big data analytics.,'' in {\em USENIX Annual Technical Conference},
  pp.~37--48, 2014.

\bibitem{hadjis2016omnivore}
S.~Hadjis, C.~Zhang, I.~Mitliagkas, D.~Iter, and C.~R{\'e}, ``Omnivore: An
  optimizer for multi-device deep learning on cpus and gpus,'' {\em arXiv
  preprint arXiv:1606.04487}, 2016.

\bibitem{chen2016revisiting}
J.~Chen, X.~Pan, R.~Monga, S.~Bengio, and R.~Jozefowicz, ``Revisiting
  distributed synchronous sgd,'' {\em arXiv preprint arXiv:1604.00981}, 2016.

\bibitem{blumofe1999scheduling}
R.~D. Blumofe and C.~E. Leiserson, ``Scheduling multithreaded computations by
  work stealing,'' {\em Journal of the ACM (JACM)}, vol.~46, no.~5,
  pp.~720--748, 1999.

\bibitem{acar2013scheduling}
U.~A. Acar, A.~Chargu{\'e}raud, and M.~Rainey, ``Scheduling parallel programs
  by work stealing with private deques,'' in {\em ACM SIGPLAN Notices},
  vol.~48, pp.~219--228, ACM, 2013.

\bibitem{dinan2009scalable}
J.~Dinan, D.~B. Larkins, P.~Sadayappan, S.~Krishnamoorthy, and J.~Nieplocha,
  ``Scalable work stealing,'' in {\em High Performance Computing Networking,
  Storage and Analysis, Proceedings of the Conference on}, pp.~1--11, IEEE,
  2009.

\bibitem{DBLP:journals/corr/ChenLLLWWXXZZ15}
T.~Chen, M.~Li, Y.~Li, M.~Lin, N.~Wang, M.~Wang, T.~Xiao, B.~Xu, C.~Zhang, and
  Z.~Zhang, ``Mxnet: {A} flexible and efficient machine learning library for
  heterogeneous distributed systems,'' {\em CoRR}, vol.~abs/1512.01274, 2015.

\bibitem{DBLP:conf/osdi/2016}
K.~Keeton and T.~Roscoe, eds., {\em 12th {USENIX} Symposium on Operating
  Systems Design and Implementation, {OSDI} 2016, Savannah, GA, USA, November
  2-4, 2016}, {USENIX} Association, 2016.

\bibitem{goyal2017accurate}
P.~Goyal, P.~Doll{\'a}r, R.~Girshick, P.~Noordhuis, L.~Wesolowski, A.~Kyrola,
  A.~Tulloch, Y.~Jia, and K.~He, ``Accurate, large minibatch sgd: training
  imagenet in 1 hour,'' {\em arXiv preprint arXiv:1706.02677}, 2017.

\bibitem{DBLP:conf/isit/2016}
{\em {IEEE} International Symposium on Information Theory, {ISIT} 2016,
  Barcelona, Spain, July 10-15, 2016}, {IEEE}, 2016.

\bibitem{lee2018speeding}
K.~Lee, M.~Lam, R.~Pedarsani, D.~Papailiopoulos, and K.~Ramchandran, ``Speeding
  up distributed machine learning using codes,'' {\em IEEE Transactions on
  Information Theory}, vol.~64, no.~3, pp.~1514--1529, 2018.

\bibitem{maityrobust}
R.~K. Maity, A.~S. Rawat, and A.~Mazumdar, ``Robust gradient descent via moment
  encoding with ldpc codes,''

\bibitem{li2018polynomially}
S.~Li, S.~M.~M. Kalan, Q.~Yu, M.~Soltanolkotabi, and A.~S. Avestimehr,
  ``Polynomially coded regression: Optimal straggler mitigation via data
  encoding,'' {\em arXiv preprint arXiv:1805.09934}, 2018.

\bibitem{yu2018lagrange}
Q.~Yu, N.~Raviv, J.~So, and A.~S. Avestimehr, ``Lagrange coded computing:
  Optimal design for resiliency, security and privacy,'' {\em arXiv preprint
  arXiv:1806.00939}, 2018.

\bibitem{ozfaturay2018speeding}
E.~Ozfaturay, D.~Gunduz, and S.~Ulukus, ``Speeding up distributed gradient
  descent by utilizing non-persistent stragglers,'' {\em arXiv preprint
  arXiv:1808.02240}, 2018.

\bibitem{ye2018communication}
M.~Ye and E.~Abbe, ``Communication-computation efficient gradient coding,''
  {\em arXiv preprint arXiv:1802.03475}, 2018.

\bibitem{raviv2017gradient}
N.~Raviv, I.~Tamo, R.~Tandon, and A.~G. Dimakis, ``Gradient coding from cyclic
  mds codes and expander graphs,'' {\em arXiv preprint arXiv:1707.03858}, 2017.

\bibitem{charles2017approximate}
Z.~Charles, D.~Papailiopoulos, and J.~Ellenberg, ``Approximate gradient coding
  via sparse random graphs,'' {\em arXiv preprint arXiv:1711.06771}, 2017.

\bibitem{QingCloud}
``Qingcloud.'' \url{https://www.qingcloud.com/}.

\bibitem{pytorch}
S.~C. Adam~Paszke, Sam~Gross and G.~Chanan, ``Pytorch.''
  \url{https://github.com/pytorch/pytorch}.

\bibitem{krizhevsky2012imagenet}
A.~Krizhevsky, I.~Sutskever, and G.~E. Hinton, ``Imagenet classification with
  deep convolutional neural networks,'' in {\em Advances in neural information
  processing systems}, pp.~1097--1105, 2012.

\bibitem{deng2009imagenet}
J.~Deng, W.~Dong, R.~Socher, L.-J. Li, K.~Li, and L.~Fei-Fei, ``Imagenet: A
  large-scale hierarchical image database,'' in {\em Computer Vision and
  Pattern Recognition, 2009. CVPR 2009. IEEE Conference on}, pp.~248--255,
  Ieee, 2009.

\bibitem{he2016deep}
K.~He, X.~Zhang, S.~Ren, and J.~Sun, ``Deep residual learning for image
  recognition,'' in {\em Proceedings of the IEEE conference on computer vision
  and pattern recognition}, pp.~770--778, 2016.

\bibitem{zhang2017poseidon}
H.~Zhang, Z.~Zheng, S.~Xu, W.~Dai, Q.~Ho, X.~Liang, Z.~Hu, J.~Wei, P.~Xie, and
  E.~P. Xing, ``Poseidon: An efficient communication architecture for
  distributed deep learning on gpu clusters,'' {\em arXiv preprint}, 2017.

\end{thebibliography}
\end{document}